\documentclass[a4paper,UKenglish,cleveref, autoref, thm-restate,numberwithinsect]{lipics-v2021}
\usepackage{graphicx} 
\usepackage{macros}
\usepackage{caption}
\usepackage{tikz,pgfplots}
\usepackage{makecell,graphicx}

\hideLIPIcs
\nolinenumbers
\title{Submodular Max-Min Allocation under Identical Valuations}

\author{Kimon {Boehmer}}{DIENS, École normale supérieure, PSL University, Paris, France \and LIP, ENS Lyon, France  \and \url{https://www.kimonboehmer.net} }{kimon.bohmer@ens.fr}{0009-0002-5831-6158}{}

\authorrunning{K. Boehmer} 

\Copyright{Kimon Boehmer} 

\ccsdesc[500]{Theory of computation~Submodular optimization and polymatroids}
\ccsdesc[300]{Theory of computation~Linear programming}
\ccsdesc[300]{Theory of computation~Packing and covering problems}

\keywords{Submodularity, Approximation algorithms, Allocation, Configuration LP} 

\relatedversion{} 

\acknowledgements{I want to thank Chien-Chung Huang for introducing me to the topic and for many helpful dicussions.
Also, I thank the anonymous reviewers for their valuable suggestions.}

\EventEditors{Pierre Fraigniaud}
\EventNoEds{1}
\EventLongTitle{20th Scandinavian Symposium on Algorithm Theory (SWAT 2026)}
\EventShortTitle{SWAT 2026}
\EventAcronym{SWAT}
\EventYear{2026}
\EventDate{June 17--19, 2026}
\EventLocation{Copenhagen, Denmark}
\EventLogo{}
\SeriesVolume{370}
\ArticleNo{4}

\begin{document}

\maketitle

\begin{abstract}
    In the problem of \textsc{Submodular Max-Min Allocation}, we are given a set of items, a set of players, and monotone submodular valuation functions that represent the satisfaction of a player with a certain subset of items.
    The goal is to find an allocation of the items to the players that maximizes the lowest satisfaction among all players. 
    
    We study this problem in the special case where all players have the same valuation function.
    We devise a greedy algorithm which gives a $0.4$-approximation, improving the previously best factor of $\frac{10}{27} \approx 0.37$ by Uziahu and Feige.

    Furthermore, we study the integrality gap of the \emph{configuration LP} when players have identical valuations. By constructing a variable assignment to the dual from a primal integral solution, we give the first constant upper bound on the integrality gap for submodular valuations.
    Generalizing the result to the case where players' allocations must be independent in $k$ given matroids, we derive a $\OO(k)$-estimation algorithm for max-min allocation subject to $k$ matroid constraints under identical valuations.

\end{abstract}
\section{General context}

\textsc{Item allocation} is a fundamental problem in combinatorial optimization. Its task consists of distributing a set of \emph{items} to a set of \emph{players}, each equipped with a valuation function over subsets of items, with the goal of optimizing a specific objective function.
The most well-studied objective functions are \emph{max-sum}, i.e. maximizing the average satisfaction across all players, and \emph{max-min}, maximizing the satisfaction of the least satisfied player.
This work focuses on the latter objective function.
Applications of max-min allocation range from assigning classrooms to charter schools \cite{kurokawa2018leximin} to sensor placement and scheduling \cite{krause2009simultaneous}.

Usually, research focuses on restricted classes of valuation functions that are both expressive and computationally tractable.
For example, a common assumption is that each player has an \emph{additive} valuation function; that is, each item is associated with a certain weight, and the satisfaction of a player is the sum of the weights of the received items.
Another important and more general class of valuation functions are \emph{submodular} functions; 
besides their useful algorithmic properties, they naturally model the law of \emph{diminishing returns} in economics. This makes them especially interesting in the context of allocation.

In this work, we will restrict our attention to the case where all players share the same submodular valuation function.
It is interesting to note that the strongest hardness results of approximating submodular max-min allocation to a factor of $1-\frac{1}{e}$ already hold even for identical valuations~\cite{khot2008inapproximability,mirrokni2008tight}.

\subsection{Our results}
We design a simple greedy algorithm, which we call \emph{Truncated Max-Sum Greedy}, that achieves a $0.4$-approximation for identical submodular valuations, improving the previously best-known factor of $\frac{10}{27} \approx 0.37$ by Uziahu and Feige~\cite{uziahu2023fair}.
Essentially, given a fixed threshold $\alpha$, our algorithm greedily builds an allocation while ignoring players who already have items of value  $\alpha \cdot \OPT$. Our main result is the following:
 \begin{theorem}\label{th:2-5}
    There is a $\frac{2}{5}$-approximation for submodular max-min allocation under identical valuations.
 \end{theorem}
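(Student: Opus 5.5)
We analyze \emph{Truncated Max-Sum Greedy} with threshold $\alpha = \frac{2}{5}$. The value $\OPT$ is not known, but we can guess it via binary search over the polynomially many candidate values, or equivalently by running the algorithm for a geometric sequence of guesses and keeping the best outcome. So assume $\OPT$ is known and fix an optimal allocation $O_1,\dots,O_n$ with $f(O_j)\ge \OPT$ for all $j$.

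The algorithm maintains bundles $A_1,\dots,A_n$, all initially empty. A player is \emph{active} while $f(A_i)<\alpha\OPT$. In each step we pick the pair (active player $i$, unassigned item $e$) that maximizes the marginal gain $f(A_i+e)-f(A_i)$ and assign $e$ to $i$. We stop when no active players remain, or when no unassigned item has positive marginal value for any active player. Succeeding means every player ends with value at least $\alpha\OPT$. So it remains to show the algorithm cannot halt while some player $i$ is still below $\alpha\OPT$.

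Suppose, for contradiction, that it does. Let $S$ be the set of players that became saturated (inactive) and $T$ the set still active at the end, so $|T|\ge 1$. Every item is either assigned or has zero marginal value for all active bundles. Key facts:
\begin{itemize}
\item A saturated player's bundle has value in $[\alpha\OPT, 2\alpha\OPT)$, and its last item had marginal at most $\alpha\OPT$.
\item Because the greedy picks the globally best marginal, a finer property holds: when item $e$ is added to bundle $A_i$, every item still unassigned at that moment has marginal at most that of $e$ with respect to every active bundle at that time.
\end{itemize}

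The core counting argument goes as follows. Consider an active player $t\in T$ with final bundle $A_t$, $f(A_t)<\alpha\OPT$. By monotonicity, $f(A_t\cup O_j)\ge\OPT$ for each $j$, so $\sum_{e\in O_j} f(e\mid A_t) \ge (1-\alpha)\OPT$ by submodularity. Unassigned items contribute nothing, so this marginal mass must come from items held by other players. Summing over all $n$ optimal bundles, the items held by others carry marginal mass at least $n(1-\alpha)\OPT$ with respect to $A_t$. Define $g(X)=f(X\mid A_t)$.
\begin{itemize}
\item For a saturated bundle $A_i$, we bound $g(A_i)\le f(A_i)<2\alpha\OPT$, and also charge the greedy order. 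Each item $e\in A_i$ was added when $A_t$ was active and \emph{smaller} than its final value, with gain at least its marginal to $A_t$ at that time. Hence $\sum_{e\in A_i} f(e\mid A_t^{\text{then}})$ can be bounded by a telescoping of $f(A_i)$, and submodularity gives $f(e\mid A_t)\le f(e\mid A_t^{\text{then}})$.
\end{itemize}
This is where we lose: the sum of singleton marginals is not bounded by $f$ of the set. So we must instead apply the greedy inequality to the per-step marginal, comparing the item's marginal to the increment of $A_i$.

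I expect the main obstacle to be getting exactly $\frac{2}{5}$. The natural accounting pits "saturated bundles overshoot up to $2\alpha\OPT$ but each absorbs at most $\approx\alpha\OPT$ of useful value" against "a stuck bundle needs $(1-\alpha)\OPT$ from each of $n$ optimal sets". The balance, something like $(1-\alpha)\ge \alpha+\alpha/2$ after handling items of large value (greater than $\alpha\OPT/2$, which saturate a bundle in at most two steps) separately, gives $\alpha=2/5$. Concretely, I would first split items into \emph{big} items (singleton value at least $\alpha\OPT/2$) and \emph{small} items. Then I would show each saturated bundle with small items has final value below $\frac{3}{2}\alpha\OPT$. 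Finally I would run the averaging over $\sum_j f(O_j)\ge n\OPT$ versus $\sum_i f(A_i)$ plus the loss from $T$, checking that the inequality becomes tight exactly at $\alpha=\frac25$.
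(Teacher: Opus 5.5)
Your warm-up is sound. The per-step greedy comparison you arrive at is exactly \cref{lem:help-max-sum-greedy}, and together with $f(A_i)\le 2\alpha\OPT$ it gives $\OPT < f(A_q)+\frac1m\sum_p f(A_p)$, hence $\frac13$. But the proposal stops there: the step to $\frac25$ is a heuristic balance, not an argument, and the plan you sketch does not close. Normalize $\OPT=5$, so $\alpha\OPT=2$. What you need is that the \emph{average} bundle value is at most $3=\frac32\alpha\OPT$. Bundles built from small items are indeed below $3$. But bundles of value in $(3,4]$ genuinely occur, for example a first item of value close to $2$ plus a second item of marginal close to $2$. Saying you will ``handle big items separately'' gives no control over how many such bundles there are. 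Your parenthetical claim is also false for submodular $f$: an item of singleton value above $\alpha\OPT/2$ need not saturate a bundle in at most two steps, because the second item's marginal can be arbitrarily small. The parameter that matters is not singleton value but $\beta$, the largest marginal of a second item.

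The missing idea is a structural counting argument showing that heavy bundles are outnumbered by light ones. The paper's route is as follows:
\begin{enumerate}
\item Take a minimal counterexample and show that no $k$ greedy bundles other than $A_q$ fit inside $k$ optimal bundles (\cref{lem:bigger-subset}).
\item Use Hall's theorem to match each greedy bundle to an optimal bundle sharing an item (\cref{lem:hall}).
\item Split players at the $\beta$-dependent threshold $4-\beta$, and prove that heavy bundles have exactly two items $x_b,y_b$ (\cref{obs:exactly-two}).
\item By instance surgery (delete a player, insert an additive dummy item), show that no heavy player sends an item to another heavy player (\cref{lem:no-edge}).
\item A charging function $g(p)$ on the reallocation digraph shows that no light player receives items from two heavy players. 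Hence $|B|\le|S|$, and the average is at most $\frac{(2+\beta)+(4-\beta)}{2}=3$.
\end{enumerate}
The $\beta$-dependent cutoff is what makes this pairing come out to exactly $3$. With your fixed cutoff at $\frac32\alpha\OPT=3$, pairing heavy bundles of value up to $2+\beta$ with light bundles of value up to $3$ averages to $\frac{5+\beta}{2}>3$ whenever $\beta>1$. Without an argument of this kind, your proposal establishes only the $\frac13$ bound.
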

 While showing a guarantee of $\frac{1}{3}$ is rather straightforward, the improvement to $\frac{2}{5}$ uses some graph-theoretic tools such as Hall's theorem and certain arguments about the digraph of reallocations needed to transform the greedy solution into an optimal solution.
Our algorithm also provides guarantees in more general settings. As an example, we consider the problem of submodular max-min allocation under identical valuations with an additional \emph{global cardinality constraint}: The objective remains the same, but we are allowed to use at most $k$ items in total. We show the following:
    \begin{proposition}\label{prop:1-3-cardinality}
    There is a $\frac{1}{3}$-approximation for submodular max-min allocation under identical valuations with a global cardinality constraint.
    \end{proposition}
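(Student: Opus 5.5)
We run Truncated Max-Sum Greedy with threshold $\alpha = \frac{1}{3}$ and stop once $k$ items have been allocated in total. Since $\OPT$ is unknown, we guess it; the standard guessing or binary search over the finitely many candidate values costs only an arbitrarily small loss, or none if we try all candidates from the greedy runs. Concretely, let $n$ be the number of players and $f$ the common valuation. Maintain bundles $A_1,\dots,A_n$ and call a player \emph{active} while $f(A_i) < \alpha\OPT$. In each step, among all pairs of an active player $i$ and an unallocated item $x$, pick the pair maximizing the marginal gain $f(A_i + x) - f(A_i)$ and assign $x$ to $i$. Stop when no player is active, when the budget of $k$ items is exhausted, or when no pair has positive gain. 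The goal is to show that at termination every player has value at least $\alpha\OPT$.

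Fix an optimal solution $O_1,\dots,O_n$ with $\sum_i |O_i| \le k$ and $f(O_j) \ge \OPT$ for all $j$. Suppose for contradiction that some player $p$ ends with $f(A_p) < \frac{1}{3}\OPT$. Then $p$ stayed active throughout, so every greedy step chose a gain at least as large as the gain of any still-unallocated item for $p$. There are two possible reasons for termination.

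\textbf{Budget exhaustion.} Here exactly $k$ items were used. We charge the optimal sets against greedy steps. For each $j$, the items of $O_j$ either were allocated by greedy or remain unallocated. By submodularity, and since $A_p$ only grows,
\[
\OPT \le f(O_j) \le f(A_p) + \sum_{x\in O_j\setminus \text{alloc}} \bigl(f(A_p+x)-f(A_p)\bigr) + \sum_{x \in O_j \cap \text{alloc}} f(x \mid A_p).
\]
This bound is awkward, so a cleaner route is a global averaging argument. Each greedy step's gain $g_t$ bounds the marginal of any unallocated item for $p$ at that moment. Sum over $j$: the total value greedy creates is $\sum_i f(A_i) = \sum_t g_t$. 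Let $U$ be the set of players that reached the threshold; the value of each $i \in U$ is at most $\frac{1}{3}\OPT$ plus its last gain, hence at most $\frac{2}{3}\OPT$ when gains are capped appropriately. The budget $k \ge \sum_j |O_j|$, together with the fact that greedy gains dominate the marginals of $O$-items with respect to $p$, gives the lower bound $\sum_t g_t \ge \sum_j (f(O_j) - f(A_p)) - (\text{value}$ of items that greedy "stole" from $O_j)$. The standard accounting is that each greedy item $x$ assigned at step $t$ is charged at most $g_t$ per optimal bundle containing it, and each item lies in one $O_j$. Combining these gives $n(\OPT - \frac{1}{3}\OPT) \le 2\sum_t g_t$.

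\textbf{Upper bound and contradiction.} On the other side, $\sum_t g_t \le (n-1)\cdot\frac{2}{3}\OPT + \frac13 \OPT$, which seemingly conflicts only weakly. To sharpen it, use that a player above threshold has value below $\frac13\OPT$ plus a single gain, and a single gain is at most $f(x)$. Items with $f(x) \ge \frac13 \OPT$ can be handled separately: they are "big" and serve a player alone. I would therefore first preprocess to ensure each item has $f(x) < \frac13\OPT$, by matching big items to players and recursing on the rest with a reduced $n$, and then rerun the counting so that $\sum_t g_t < \frac23 n \OPT$. This contradicts the lower bound, since the lower bound forces $\sum_t g_t \ge \frac{n}{2}\cdot\frac{2}{3}\OPT$ together with the additional $p$-term.

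The other termination case, where no positive gain remains, is the easy case: every $O_j$ item is already allocated, and the same charging closes it. The main obstacle is making the charging in the budget case tight enough to reach exactly $\frac13$ rather than a weaker constant. If the simple averaging is lossy, I would instead compare prefix-by-prefix: at each step $t$, the greedy gain is at least the marginal for $p$ of the next unused $O$-item, analogous to the $(1-1/e)$ versus $\frac12$ cardinality-greedy analysis. The $\frac13$ would then arise as the $\frac12$-style greedy loss combined with the threshold truncation.
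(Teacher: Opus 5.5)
\textbf{There is a genuine gap.} The lower bound that carries the cardinality-constrained case is never established, and the contradiction you claim at the end does not follow. Your algorithm and your upper bound $\sum_i f(A_i)\le (n-1)\cdot\frac23\OPT+f(A_p)$ (after removing items of value at least $\frac13\OPT$) are the right ingredients. But the inequality you assert, $n(\OPT-\frac13\OPT)\le 2\sum_t g_t$, only gives $\sum_t g_t\ge \frac{n}{3}\OPT$. That is perfectly compatible with $\sum_t g_t<\frac23 n\OPT$, and the unspecified ``additional $p$-term'' does not close the factor-$2$ gap. The ``stolen items'' accounting that produced the factor $2$ is also never made precise. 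What you need is the lossless bound
\[
n\bigl(\OPT-f(A_p)\bigr)\le \sum_{x\in O}\Delta_f(x\mid A_p)\le \sum_t g_t=\sum_i f(A_i),\qquad O:=\textstyle\bigcup_j O_j .
\]
Your pointwise observation gives $g_t\ge \Delta_f(x\mid A_p^{(t)})$ for every $x$ still unallocated at step $t$. This yields the bound only if each greedy step is charged by at most one item of $O$. Your sketch never secures that.

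\textbf{The missing idea} is an injective charging map $b$. Send each $x\in O\cap A$ to itself, and send $O\setminus A$ injectively into $A\setminus O$. In the budget-exhaustion case such an injection exists because $|A|=k\ge |O|$ implies $|O\setminus A|\le|A\setminus O|$. In both cases $x$ is still available when $b(x)$ is allocated, since an item of $O\setminus A$ is never allocated. Hence, with $r$ the player receiving $b(x)$,
\[
\Delta_f(x\mid A_p)\le\Delta_f\bigl(x\mid A_p^{(b(x))}\bigr)\le \Delta_f\bigl(b(x)\mid A_r^{(b(x))}\bigr).
\]
Summing over the injection gives the display above with no factor $2$. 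Then $\OPT\le f(A_p)+\frac1n\sum_{x\in O}\Delta_f(x\mid A_p)\le f(A_p)+\frac1n\sum_i f(A_i)<f(A_p)+\frac23\OPT$, which contradicts $f(A_p)<\frac13\OPT$. This is exactly how the paper re-proves its key lemma in the constrained setting.

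\textbf{Two smaller points.} In your ``no positive gain'' termination case, the items of $O$ need not all be allocated, as you claim. Rather, the unallocated ones have zero marginal with respect to the still-active $A_p$, which is why that case is easy. The ``prefix-by-prefix'' fallback you mention is unnecessary, and as stated it is not carried out.
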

Furthermore, we study the integrality gap of the \emph{configuration LP} for max-min allocation, an important LP relaxation for max-min allocation.

We bound the integrality gap of the configuration LP by transforming certain primal integral solutions of the configuration LP to satisfying assignments of a dual formulation of this LP.
Unlike to similar approaches when rounding configuration LPs~\cite{svensson2011santa}, our analysis only uses the fact that the primal integral solution is \emph{at most} $\OPT$.
As a result, our proof does not provide an algorithm (not even a non-trivial exponential one) for constructing an integral solution that demonstrates the integrality gap. 

Using this technique, we give the first constant upper bound on the integrality gap of the configuration LP for identical submodular valuations.
\begin{restatable}{theorem}{theoremRestateThree} \label{th:3}
    The integrality gap of the configuration LP with identical submodular valuations is at most $3$.
\end{restatable}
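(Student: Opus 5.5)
Throughout, $f$ denotes the common monotone submodular valuation, normalized so that $f(\emptyset)=0$. The plan is to prove the contrapositive via LP duality: if the configuration LP is feasible for a target $T$, then $\OPT \ge T/3$.

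Because valuations are identical, the configuration LP for target $T$ aggregates over players. It asks for $y_C\ge 0$, indexed by configurations $C$ with $f(C)\ge T$, such that $\sum_C y_C\ge n$ and $\sum_{C\ni j} y_C\le 1$ for every item $j$. By Farkas' lemma, this system is infeasible if and only if there are item prices $p\ge 0$ with
\[ n\cdot \min_{C:\,f(C)\ge T} p(C) \;>\; p(M). \]
So it suffices to assume $\OPT<T/3$ and exhibit such prices. Following the approach announced above, the prices will be read off from a specific integral allocation whose only property used is that its value is at most $\OPT<T/3$.

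\textbf{Step 1: the integral object.} Fix a greedy chain. Order the items $j_1,j_2,\dots$ by repeatedly choosing an item of maximum marginal value with respect to the prefix $G_{t-1}$. Cut this sequence into consecutive bundles $B_1,B_2,\dots$: each bundle is closed as soon as its own value reaches $T/3$. Call an item big if $f(j)\ge T/3$; each big item forms a bundle on its own. Since $\OPT<T/3$, at most $n-1$ bundles reach value $T/3$, and the leftover $R$ has $f(R)<T/3$. Small items have value $<T/3$, so every closed bundle has value $<2T/3$.

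\textbf{Step 2: prices.} Give each big item price $T/3$, and give each small item $j_t$ its greedy marginal $p_{j_t}=f(j_t\mid G_{t-1})$, possibly scaled inside each bundle so that every closed bundle has total price exactly $T/3$. The total price is then at most $(n-1)\cdot T/3 + T/3 \cdot(\text{small correction})$. More crudely, it is at most $(n-1)\frac{2T}{3}+\frac{T}{3}$ before scaling. The factor $3$ should come exactly from balancing these terms, that is, from the intended inequality
\[ p(M) < n\cdot \tfrac{T}{3} \quad\text{against}\quad p(C)\ge \tfrac{T}{3} \text{ for every } C \text{ with } f(C)\ge T. \]

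\textbf{Step 3: lower bound for configurations (main obstacle).} We must show $p(C)\ge T/3$ whenever $f(C)\ge T$. If $C$ contains a big item we are done. Otherwise, let $t$ be the largest index with $f(G_t)<2T/3$; I expect $t$ to be of order $T$ in value terms by construction. Submodularity and the greedy choice give
\[ T\le f(C)\le f(G_t)+\sum_{j\in C\setminus G_t} f(j\mid G_t). \]
Moreover, each $f(j\mid G_t)$ is at most the greedy marginals $p_{j_s}$ for $s\le t$. This yields $\sum_{j\in C\setminus G_t} f(j\mid G_t)\ge T/3$. The difficulty is converting this into a bound on $p(C)$, since items of $C$ outside the chain prefix have marginal prices that may be smaller than $f(j\mid G_t)$.

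My first attempt is to choose the prefix adaptively per bundle. Within each bundle the marginals are nonincreasing, so an averaging argument compares $p(C\cap B_i)$ with $f(C\cap B_i\mid\text{prefix})$ and loses at most the last item of each bundle. That item has marginal below $T/3$, which is precisely the slack the factor $3$ allows. If this per-bundle accounting fails, I would fall back to charging via a Hall-type argument between $C$ and the bundles. Here the observation is that $C$ cannot "use up" more than a third of $T$ from any single bundle of value below $2T/3$, by monotonicity.

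Once Step 3 is established, comparing it with Step 2 gives the Farkas certificate. Hence LP feasibility at $T$ forces $\OPT\ge T/3$, which proves that the integrality gap is at most $3$.
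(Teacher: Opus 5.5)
\textbf{Verdict: there is a genuine gap, and it sits in Step 3.} Your duality set-up is the paper's (D2) in different notation. Once each closed bundle is rescaled to total price $T/3$, the value constraint does hold. The rescaling is necessary, since your unscaled bound $(n-1)\frac{2T}{3}+\frac{T}{3}$ exceeds $n\cdot\frac{T}{3}$.

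Step 3 is the whole content of the theorem, and you only hope for it. It is in fact false for your prices. A chain marginal $f(j_t\mid G_{t-1})$ measures what $j_t$ adds to the chain prefix, not what it could add to a low-value bundle, so an item that duplicates an earlier chain item gets almost no price.

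Here is a concrete instance. Take five players, $T=30$, and a coverage function on four disjoint blocks $Q_1,\dots,Q_4$, each of weight $9$. Let $\epsilon>0$ be small.
\begin{itemize}
\item Item $h_i$ covers $Q_i$ plus a private element of weight $1/2$.
\item Item $v_i$ covers $Q_i$ plus a private element of weight $\epsilon^2$ for $i\le 2$ and $\epsilon^4$ for $i\ge 3$.
\item Items $\ell_1,\ell_2$ consist only of private elements of weights $\epsilon$ and $\epsilon^3$.
\end{itemize}
The chain is $h_1,\dots,h_4,\ell_1,v_1,v_2,\ell_2,v_3,v_4$. The closed bundles are $\{h_1,h_2\},\{h_3,h_4\},\{\ell_1,v_1,v_2\},\{\ell_2,v_3,v_4\}$. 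Also $\OPT<10=T/3$: every set of value at least $10$ contains two of the eight items $h_i,v_i$, so five players would need ten of them.

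Now take $C=\{v_1,\dots,v_4\}$. It has $f(C)>36\ge T$, but each $v_i$ gets price about $10\epsilon$ after rescaling (at most $\epsilon^2$ before), so $p(C)\ll T/3$. A valid certificate does exist here, e.g.\ price $5$ on each $h_i,v_i$ and $0$ on $\ell_1,\ell_2$; it is your pricing rule that fails, not the statement. So no accounting for Step 3, whether per-bundle averaging or Hall-type charging, can rescue these prices. Separately, your claim that $C$ can draw at most $T/3$ from one bundle is unjustified: monotonicity only gives $f(C\cap B_i)<2T/3$.

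\textbf{The missing idea} is an exchange inequality between each item's price and its marginal value with respect to one fixed low-value bundle. The paper proceeds as follows.
\begin{itemize}
\item Scale so $\OPT=1$ and discard items of value above $1$ (Lemma~\ref{lem:no-big-integral}).
\item Take an allocation $A$ maximizing $\sum_p\min(2,f(A_p))$.
\item Price an item $j\in A_p$ at $y_j=\Delta_{\overline f}(-j\mid A_p)/T_{\overline f}(p)$. Each bundle then has total price at most $1$, so the total price is at most the number of players. The truncation at $2$ gives $T_{\overline f}(p)\le 2$.
\item Use optimality against moving a single item $j$ into the min bundle $A_q$. Since $f(A_q)\le 1$ and $f(j)\le 1$, the truncation is inactive there, so $\Delta_f(j\mid A_q)\le 2y_j$ for every item $j$.
\item Hence $f(C)\le f(A_q)+\sum_{j\in C}\Delta_f(j\mid A_q)\le 1+2y(C)$, and $f(C)>3$ forces $y(C)>1$.
\end{itemize}
A single greedy chain provides no inequality of this kind.
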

Additionally, we give a lower bound on this integrality gap.
\begin{restatable}{proposition}{propLowerBoundSubmodularIntegralityGap}\label{prop:4-3}
    The integrality gap of the configuration LP with identical submodular valuations is at least $4/3$, even with $6$ items and $3$ players.
\end{restatable}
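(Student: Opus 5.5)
We construct an explicit instance whose configuration LP is feasible at level $T=1$ while every integral allocation has minimum value at most $3/4$. This gives a gap of at least $4/3$. The combinatorial skeleton is a graph on the $6$ items that has a fractional perfect matching but no perfect matching: two disjoint triangles. We take items $A=\{a_1,a_2,a_3\}$ and $B=\{b_1,b_2,b_3\}$ and $3$ players. The valuation depends only on $(\alpha,\beta)=(|S\cap A|,|S\cap B|)$ via
\[
  f(S)=\min\Bigl(1,\ \tfrac12\max(\alpha,\beta)+\tfrac14\min(\alpha,\beta)\Bigr).
\]
The key values are:
\begin{itemize}
  \item a single item has value $1/2$;
  \item a pair inside one triangle has value $1$;
  \item a cross pair $\{a_i,b_j\}$ has value $3/4$.
\end{itemize}

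\textbf{Step 1: monotonicity and submodularity.} Monotonicity is immediate. For submodularity, since $f$ is symmetric within $A$ and within $B$, it suffices to check decreasing marginals on the grid $\{0,\dots,3\}^2$. Adding an item to a side gives a marginal of $1/2$ if that side is (weakly) the larger one, $1/4$ if it is strictly smaller, and $0$ once the cap $1$ is reached. Growing the set can only move a side from larger to smaller or hit the cap, so marginals only decrease. I would still verify this by a short case table, since the $\min$ with $1$ and the tie case $\alpha=\beta$ are where a mistake could hide. This is the step I expect to be the main obstacle. If this particular formula fails, I would fall back to a weighted coverage function realizing the same three key values.

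\textbf{Step 2: LP feasibility at $T=1$.} The configurations of value at least $1$ include the six within-triangle pairs. Set $x_{i,C}=1/6$ for every player $i$ and each of these six pairs $C$. Then each player has total weight $1$. Each item lies in exactly two such pairs, so its load is $3\cdot 2\cdot\frac16=1$. Hence the LP value is at least $1$.

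\textbf{Step 3: integral optimum at most $3/4$.} In any allocation of the $6$ items to $3$ players, some player receives at most $2$ items. If that player receives at most one item, its value is at most $1/2$. Otherwise every player has exactly $2$ items, i.e.\ the allocation is a perfect matching of the six items. Because each triangle has an odd number of items, at least one pair must be a cross pair, which has value $3/4$. Thus $\OPT\le 3/4$, and the integrality gap is at least $\frac{1}{3/4}=\frac43$.
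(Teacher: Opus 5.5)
Your proof is correct and essentially identical to the paper's. Your instance is the paper's scaled by $1/4$: singletons $1/2$, same-triangle pairs $1$, cross pairs $3/4$, and every set of size at least $3$ has value $1$. The LP solution with weight $1/6$ on each within-triangle pair and the parity argument on the integral side also match.

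The one flaw is the heuristic you give in Step 1, which is false as stated. Compare adding $a_2$ to $\{b_1\}$ with adding it to $\{a_1,b_1\}$:
\begin{itemize}
\item At $\{b_1\}$ your rule predicts marginal $1/4$, since side $A$ is strictly smaller.
\item At $\{a_1,b_1\}$, side $A$ has moved from strictly smaller to weakly larger and the value $3/4$ is below the cap, so your rule predicts $1/2$.
\item That would violate submodularity. Indeed the uncapped $\frac12\max+\frac14\min$ is not submodular.
\end{itemize}
Only the cap saves you: the true second marginal is $1/4$.

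The clean check, as in the paper, is to note that marginals lie in
\begin{itemize}
\item $\{1/2\}$ when $|S|=0$,
\item $[1/4,1/2]$ when $|S|=1$,
\item $[0,1/4]$ when $|S|=2$,
\item $\{0\}$ when $|S|\geq 3$.
\end{itemize}
Any $S\subseteq T$ with $S\neq T$ has $|S|<|T|$, so these ranges give decreasing marginals directly.
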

We believe that the technique of \cref{th:3} can be used to show constant integrality gaps for submodular max-min allocation in more general settings.
As an example, we also consider the case where we are given $k$ matroids on the items and the sets that players receive must all be independent in all matroids.
For this case, we also give a constant upper bound on the integrality gap, independent of $k$.
\begin{restatable}{theorem}{theoremRestateFive}\label{th:8}
    The integrality gap of the configuration LP with identical submodular valuations and any number of matroid constraints is at most $5$.
\end{restatable}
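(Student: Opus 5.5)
The approach mirrors the dual-fitting argument behind \cref{th:3}. Fix a target value $T$ and the configuration LP with identical valuation $f$, where configurations are now restricted to sets $C$ with $f(C)\ge T$ that are independent in all $k$ matroids. Its dual asks for prices $y_i\ge 0$ on players and $z_j\ge 0$ on items such that $\sum_i y_i > \sum_j z_j$ and $y_i \le \sum_{j\in C} z_j$ for every feasible configuration $C$. If $\OPT < T/5$, we will exhibit such a dual solution, so the LP at level $T$ is infeasible; this shows the gap is at most $5$. By normalization we may take all $y_i=1$. It then suffices to find item prices $z$ with $\sum_j z_j < m$, where $m$ is the number of players, and $z(C)\ge 1$ for every independent $C$ with $f(C)\ge T$.

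To build $z$, take an optimal integral allocation, or more conveniently a maximal greedy-type allocation, with all values below $T/5$. Following the proof of \cref{th:3}, construct it by a truncated greedy: repeatedly give to a player whose value is still below a threshold $\beta T$ the item with largest marginal gain that keeps the player's set independent in all matroids. Stop when no such augmentation is possible. Since we only use $\OPT<T/5$, at least one player $p$ ends with $f(S_p)<T/5$. Every remaining item $j$ either has small marginal value with respect to $S_p$, or cannot be added to $S_p$ because of a matroid. Set $z_j$ proportional to the marginal gain each allocated item had when it was assigned, scaled by $1/T$ and by a constant. Unallocated items get price $0$.

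The covering inequality for a feasible $C$ with $f(C)\ge T$ comes from the following steps.
\begin{enumerate}
\item By submodularity, $T\le f(C)\le f(S_p)+\sum_{j\in C\setminus S_p} f(j\mid S_p)$.
\item Items of $C$ addable to $S_p$ have marginal gain at most their greedy-assignment gain, which is paid for by $z$.
\item Items of $C$ blocked by matroid $\ell$ are handled by a matroid exchange argument. Since $C$ is independent in matroid $\ell$, there is an injection from the blocked elements of $C$ into the elements of $S_p$, or into elements allocated before $S_p$ became blocking. Each such element carries price at least the marginal gain of the blocked item.
\end{enumerate}
This exchange charging is where the factor grows from $3$ to $5$. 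Each item of $S_p$ can be charged once by the addability argument and once via the exchange injection, independently of $k$, because we only need one violated matroid per blocked item. The total value of $S_p$ is below $T/5$, which absorbs the constant overhead.

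For the total price, $\sum_j z_j$ telescopes per player to roughly $f(S_i)/T\cdot c$. Each player is truncated at value below $T/5$ plus one item's marginal. A single item of value at least $T$ would by itself be a configuration, but then $\OPT\ge T$ unless the number of items is smaller than $m$; that case is handled separately by pricing big items at $1$. So $\sum_j z_j<m$ once the constants are chosen with thresholds around $T/5$.

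The main obstacle is step 3: bounding the matroid-blocked items of $C$ uniformly over $k$ matroids. I would first try the standard fact that for independent $C$ and a maximal-under-constraint $S_p$, the blocked items can be matched injectively to $S_p$ per matroid. I would then argue that charging only to $S_p$, whose total price is tiny, loses at most a factor $2$ in total rather than $k$.
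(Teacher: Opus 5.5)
\textbf{There is a genuine gap: step 3 fails, and your pricing scheme cannot be repaired.} For a single matroid $\mathcal M_\ell$, the items of $C$ blocked by $\mathcal M_\ell$ do inject into $S_p$. But different items of $C$ can be blocked by different matroids. You then get $k$ injections, and one element of $S_p$ can be charged $k$ times. The resulting inequality is only $T\le (k+1)f(S_p)+O(T\,z(C))$, which is vacuous once $k\ge 4$. This is exactly the $1/(k+1)$ loss of greedy over $k$ matroids, and ``one violated matroid per blocked item'' does not prevent it. Note also that charging to $S_p$ never raises $z(C)$; what has to be absorbed is $k\cdot f(S_p)$.

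Here is a concrete instance. Let $m=k\ge 6$, let $f$ be additive, and take items $s_1,\dots,s_m$ of value $1+\epsilon$ and $c_1,\dots,c_k$ of value $1$, with $\epsilon$ small. Let $\mathcal M_\ell$ be the partition matroid whose only nontrivial block is $\{s_1,\dots,s_m,c_\ell\}$ with capacity $1$.
\begin{itemize}
\item A bundle containing some $s_i$ can contain nothing else, while every subset of $C=\{c_1,\dots,c_k\}$ is independent.
\item Not all $m=k$ players can receive two $c$'s, so $\OPT=1+\epsilon$. Yet $C$ is a configuration for every $T\in(5(1+\epsilon),k]$.
\item Your greedy first hands $s_1,\dots,s_m$ to distinct players, since these have the largest marginals. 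After that every player is blocked from every $c_\ell$ (by $\mathcal M_\ell$).
\item So every $c_\ell$ stays unallocated with price $0$, and $z(C)=0<1$.
\end{itemize}
Separately, your treatment of items with $f(j)\ge T$ is muddled. The paper instead deletes any item with $f(j)>\OPT$ together with one player (\cref{lem:no-big-integral}).

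\textbf{The missing idea is never to add items to an existing bundle.} The paper normalizes $\OPT=1$ and assumes $f(j)\le 1$. It takes a feasible allocation $A$ maximizing $\sum_p\min(2,f(A_p))$. It prices only items of non-min players; items of the min player $q$ and unallocated items get price $0$. Write $y$ for the item prices (your $z$). Suppose an independent $D$ had $f(D)>5$ and $y(D)<1$.
\begin{itemize}
\item Split $D=D_1\uplus D_2$ greedily so that $f(D_i)>2$; the first part stops at value at most $3$ because items are at most $1$.
\item Let $C$ be the cheaper part, so $y(C)<\frac{1}{2}$, and replace $A_q$ by $C$ outright.
\item This is feasible in all $k$ matroids simply because $C\subseteq D$, so $k$ never enters.
\item Player $q$ gains at least $2-1=1$ while the others lose less than $1$, contradicting maximality. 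This is where $5=2+2+1$ comes from.
\end{itemize}

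\textbf{If you adopt this route, keep your assignment-order prices.} Set $y_j=\frac{1}{2}\Delta_{\overline f}(j\mid\text{items of }A_p\text{ preceding }j)$ for a fixed order of each $A_p$. Submodularity then gives $\overline f(A_p)-\overline f(A_p\setminus C)\le 2\,y(A_p\cap C)$ and $y(A_p)\le 1$. The paper's normalized leave-one-out marginals $\Delta_{\overline f}(-j\mid A_p)$ only lower-bound the cost of deleting several items at once. For example, two perfect substitutes in one bundle each have leave-one-out marginal $0$ but cost $1$ to remove together. So the first inequality in the paper's loss estimate needs exactly this replacement.
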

Accounting for the loss incurred when solving the separation oracle of the dual LP, this yields a $\OO(k)$-estimation algorithm for max-min allocation subject to $k$ matroid constraints.

\subsection{Related work}
\subparagraph*{Additive valuations.}
Additivity is one of the simplest assumptions on the valuation functions of the players.
In the general version where players may have non-identical additive valuation functions, the state-of-the-art algorithm by Chakrabarty, Chuzhoy and Khanna~\cite{chakrabarty2009allocating} achieves an approximation factor of $\OO(n^{\varepsilon})$. 
Interestingly, the best-known lower bound, shown by Bez{\'a}kov{\'a} and Dani~\cite{bezakova2005allocating}, only rules out a factor better than $2$.

A well-studied special case is the so-called \emph{restricted assignment} case, where players have identical valuations, but a bipartite graph between items and players restricts the allowed allocations.
To tackle this problem, Bansal and Sviridenko~\cite{bansal2006santa} introduced the configuration LP, and subsequent work \cite{bansal2006santa, asadpour2012santa, polavcek2015quasi, annamalai2017combinatorial} led to the current best approximation factor of $\frac{1}{4}-\varepsilon$ for the problem \cite{cheng2019restricted}.
Recently, the integrality gap of the configuration LP for restricted assignment was shown to be at most $3.53$~\cite{haxell2025improved}.
For identical valuations, Woeginger designed a polynomial-time approximation scheme (PTAS)~\cite{woeginger1997polynomial}.
\subparagraph*{Submodular valuations.}
The first approximation algorithm for general submodular valuations was given in \cite{golovin2005max}, and even though improvements have been made \cite{goemans2009approximating}, the best-known approximation factor remains polynomial in the input.
For a constant number of players, Chekuri, Vondrák and Zenklusen~\cite{chekuri2010dependent} gave a $1-1/e$-approximation.

For max-sum, Vondrak gave a $1-1/e$-approximation~\cite{vondrak2008optimal}, which was shown to be optimal~\cite{khot2008inapproximability,mirrokni2008tight}.
In the restricted assignment case, Bamas, Morell and Rohwedder~\cite{bamas2025submodular} generalized the techniques from Bansal and Sviridenko to obtain a $\mathcal{O}(\log \log n)$-approximation.
Under identical valuations, Krause, Rajagopal, Gupta and Guestrin~\cite{krause2009simultaneous} gave the first constant-factor approximation algorithm, which uses an algorithm for max-sum as a subroutine. Plugging in Vondrak's optimal algorithm, their method achieves a $\frac{e-1}{3e}\approx 0.21$-approximation.

More recently, another line of research has studied the problem of finding an allocation which approximates the \emph{maximin share} of all players, which is the largest value a player can ensure by proposing an allocation and receiving the bundle which is worst for them.
Under identical valuations, this problem is equivalent to submodular max-min allocation.
Barman and Krishnamurthy~\cite{barman2020approximation} gave a different algorithm with the same guarantee as Krause et al~\cite{krause2009simultaneous}.
Ghodsi, Hajiaghayi, Seddighin, Seddighin and Yami~\cite{ghodsi2018fair} improved this factor to $\frac{1}{3}$.
Finally, Uziahu and Feige~\cite{uziahu2023fair} gave a $\frac{10}{27}$-approximation by analyzing a bidding procedure which uses a truncated version of the valuation function.
\subparagraph*{Machine scheduling.}
If we reverse the roles of \emph{max} and \emph{min}, we obtain the classical \emph{makespan minimization} setting, where we try to assign jobs to machines so as to minimize the latest completion time.
As its max-min counterpart, it admits a PTAS for identical additive valuations~\cite{hochbaum1987using}. 
The submodular variant of makespan minimization, sometimes called \emph{submodular load balancing}, is not $o\left(\sqrt{\frac{n}{\log n}}\right)$-approximable~\cite{svitkina2011submodular}.
This stands in stark contrast to its ``dual'' max-min variant.

\section{Preliminaries}\label{sec:preliminaries}
Consider a set $J$ (with $n:=|J|$) of items and a \emph{valuation function} $f:2^J \rightarrow \RR$ that assigns a value to each subset of $J$.
We simplify notation by writing $f(j)$ instead of $f(\{j\})$, for $j \in J$.

We focus on the case where $f$ is \emph{submodular}: A function $f:2^J \rightarrow \RR$ is called submodular, if for every two subsets $S$ and $T$,
$$f(S \cup T) \leq f(S)+f(T) - f(S \cap T).$$
Closely related to submodular functions is the notion of \emph{marginal contribution}.
We define the marginal contribution on $f$ of an item $j$ with respect to a subset $S$ as $\Delta_f(j \mid S):=f(S \cup \{j\}) - f(S)$.
We extend this definition to sets (i.e. $\Delta_f(S \mid T)=f(S \cup T)-f(T)$).
In fact, a useful characterization of submodularity is that for all subsets $S \subseteq T$ of $J$ and $j \in J$, 
$$ \Delta_f(j \mid S) \geq \Delta_f(j \mid T).$$
A function is \emph{monotone} if for all $S \subseteq T$, we have $f(S) \leq f(T)$. We assume $f(\emptyset)=0$, and so a monotone function is non-negative.
In \ref{app:basic}, we give some basic facts about submodular functions that will be used implicitly throughout our proofs.

We define $P$ (with $m:=|P|$) to be the set of \emph{players}. 
Since we are focusing on identical valuations, each player $p$ has the same valuation function $f$ describing their satisfaction with each set of items.

More formally, a \emph{partial allocation} of $J$ is a partition $A=(A_1,...,A_m)$ of a subset of $J$.
Slightly abusing notation, we write $j \in A$ for $j \in \bigcup_{p \in P} A_p$.
If every item is allocated, we speak of an \emph{allocation}.
We refer to the sets of items in a (partial) allocation as \emph{configurations}.
We define $f_{\sf sum}(A)=\sum_{p \in P} f(A_p)$ and $f_{\sf min}(A):=\min_{p \in P}f(A_p)$.
We will refer to players $p$ with $f(A_p)=f_{\sf min}(A)$ as \emph{min players}.

We briefly define matroids since we discuss them in \cref{sec:4-3}. A \emph{matroid} $\mathcal{M} = (J,\mathcal{I})$ is a non-empty subset-closed set system in which for any $J' \subseteq J$, all sets of $\mathcal{I}$ restricted to $J'$ which are maximal have the same cardinality. The sets in $\mathcal{I}$ are called \emph{independent}.
\subsection{Submodular Max-Min Allocation}
With these definitions, we are ready to define the main computational problem of interest.
\problem{Submodular Max-Min Allocation}{A set of items $J$, a set of players $P=\{1,...,m\}$, and a monotone submodular function $f: 2^J \rightarrow \RR$.}{An allocation $A$ that maximizes $f_{\sf min}(A)$}

Notice that each player has the same valuation function $f$. We assume that we are given access to a so-called \emph{value oracle}, which, given any $S \subseteq J$, returns $f(S)$.

{\bf Approximation Hardness.}
Khot, Lipton, Markakis and Mehta~\cite{khot2008inapproximability} showed that when players have identical valuations, it is \compl{NP}-hard to distinguish between the case where all players can achieve value $1$ and the case where a player achieves at most value $1-\frac{1}{e}+\varepsilon$ on average.
Since the average value is always at least as big as the min value, their hardness result, while originally designed for max-sum, immediately translates to our setting:
\begin{corollary}[\cite{khot2008inapproximability}]
    Unless \emph{$\compl{P}=\compl{NP}$}, \emph{\textsc{Submodular Max-Min Allocation}} cannot be approximated within a factor $1-\frac{1}{e}+\varepsilon$ for any $\varepsilon>0$.
\end{corollary}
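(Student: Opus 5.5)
The corollary follows by a direct reduction from the gap result of Khot, Lipton, Markakis and Mehta. The key observation is that for any allocation, the minimum value is at most the average:
$$f_{\sf min}(A)\le \frac{1}{m}f_{\sf sum}(A).$$

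First, I would fix $\varepsilon>0$ and recall the hard instance family of \cite{khot2008inapproximability}. These instances have identical monotone submodular valuations, and it is \compl{NP}-hard to distinguish two cases:
\begin{itemize}
\item[(YES)] there is an allocation in which every player has value $1$, so $\OPT\ge 1$ for max-min;
\item[(NO)] every allocation has average value at most $1-\frac1e+\varepsilon'$, for a suitable $\varepsilon'<\varepsilon$.
\end{itemize}
In the NO case, the inequality above gives $f_{\sf min}(A)\le 1-\frac1e+\varepsilon'$ for every allocation $A$.

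Next, suppose an algorithm $\mathcal{A}$ approximates \textsc{Submodular Max-Min Allocation} within a factor $1-\frac1e+\varepsilon$. Run $\mathcal{A}$ on the instance and evaluate $f_{\sf min}$ of its output using $m$ value-oracle calls. In the YES case the output has value at least $1-\frac1e+\varepsilon>1-\frac1e+\varepsilon'$. In the NO case it has value at most $1-\frac1e+\varepsilon'$. Comparing the output value with this threshold therefore decides the \compl{NP}-hard gap problem in polynomial time, which forces $\compl{P}=\compl{NP}$.

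The only point needing care is that the source result is stated for max-sum and is phrased in terms of the oracle or explicit representation used there. I would check two things. First, the YES case must guarantee that \emph{every} player gets value $1$, not merely average $1$; the statement in the paper asserts exactly this. Second, the valuation must be given in a form that supports the value-oracle model, so that the reduction can compute $f_{\sf min}$. Once these are confirmed, no further argument is needed.
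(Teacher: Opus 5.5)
Your proposal is correct and follows the same route as the paper. The Khot--Lipton--Markakis--Mehta gap (all players can reach value $1$ versus average value at most $1-\frac{1}{e}+\varepsilon'$) carries over to max-min because $f_{\sf min}(A)\le \frac{1}{m}f_{\sf sum}(A)$. Your explicit choice $\varepsilon'<\varepsilon$ and the threshold test just spell out the decision procedure that the paper leaves implicit.
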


\section{Truncated Max-Sum Greedy}\label{sec:greedy}

In this section, we present our greedy algorithm for \textsc{Submodular Max-Min Allocation}.

Fix a threshold $\alpha \leq 1$.
We assume that the algorithm knows the optimal value $\OPT$. \footnote{We assume that the size of a rational representation of $f(S)$ is at most polynomial in $n+m$. We can therefore scale $f$ such that all values are integral and perform a binary search with start interval $[0,\overline{f}(J)]$, where $\overline{f}$ is the scaled version of $f$. Under our assumption, the binary search procedure will only recurse a polynomial number of times. }
First, the algorithm takes the $m$ items $j$ with largest $f(j)$-value and allocates one of them to each player.
Then, it greedily chooses the pair of available item $j$ and player $p$ that maximizes the marginal contribution $\Delta_{f}(j \mid A_p)$, under the constraint that $f(A_p) < \alpha \cdot \OPT$, i.e. we only consider players that have not reached an $\alpha$-fraction of their desired value.

The goal is to show that each player will eventually surpass this threshold, which proves an $\alpha$-approximation. 
The pseudocode is shown in \cref{algo:trunc}.
\begin{algorithm}
    \SetKwComment{Comment}{/* }{ */}
    \caption{The truncated max-sum greedy}\label{alg:greedy-maxsum}
    \KwData{items $J$, players $P$, submodular function $f: 2^J \rightarrow \RR$}
    \KwResult{an allocation $A$}
    Remove the $m$ items $j$ with largest $f(j)$ from $J$, and give one of them to each player\; 
   \While{$J \neq \emptyset$ and there exists $p \in P$ with $f(A_p) < \alpha \cdot \OPT$}
   {
    $(j^*,p^*) \gets \arg \max_{j \in J, p \in P: f(A_p)<\alpha \cdot \OPT} \Delta_f(j \mid A_p)$\;
    $J \gets J - \{j^*\}$\;
    $A_{p^*} \gets A_{p^*} \cup \{j^*\}$\;
   }
   \Return{$(A_1,...,A_m)$}
   \label{algo:trunc}
\end{algorithm}
\begin{observation}\label{obs:small-items}
    We can assume $f(j)<\alpha \cdot \OPT$ for all $j \in J$.
\end{observation}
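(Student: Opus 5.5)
The plan is a standard reduction: if some item $j$ has $f(j)\ge \alpha\cdot\OPT$, we give $j$ alone to one player, delete that player and that item, and recurse on the smaller instance with $m-1$ players, items $J\setminus\{j\}$, and the same valuation $f$. If an item is that large, there is at most one such step per player, and once no large item remains the assumption $f(j)<\alpha\cdot\OPT$ holds for the residual instance. The claim is then that an $\alpha$-approximate allocation for the residual instance, together with the singleton bundles, is an $\alpha$-approximate allocation for the original instance.

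The key step is to show that the optimum of the residual instance is at least $\OPT$; call the residual optimum $\OPT'$. I would take an optimal allocation $A^*$ of the original instance and let $p$ be the player whose bundle $A^*_p$ contains $j$. Remove player $p$ and distribute the rest of $A^*_p$, namely $A^*_p\setminus\{j\}$, arbitrarily among the remaining players. By monotonicity no remaining player's value decreases, so each of the $m-1$ players still has value at least $\OPT$. Hence $\OPT'\ge\OPT$.

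Two side cases need care. If $m=1$, the single player takes every item; since $f$ is monotone and $f(j)\ge\alpha\cdot\OPT$, the value is still at least $\alpha\cdot\OPT$. If at some stage all remaining players have been served by singletons, nothing more is required. Any leftover items can go to arbitrary players, which only helps by monotonicity.

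I also have to deal with the algorithm not knowing the true residual optimum. The threshold $\alpha\cdot\OPT$ stays fixed throughout, using the original $\OPT$ (obtained from the binary search in the footnote). So I need the greedy's guarantee in the following form: whenever $\OPT$ is at most the instance's true optimum and every item satisfies $f(j)<\alpha\cdot\OPT$, every player reaches $\alpha\cdot\OPT$. This form follows from the analysis, which only uses the existence of an allocation with all bundles of value at least $\OPT$. I expect this justification to be the only subtle point of the proof. The singleton players then have value $f(j)\ge\alpha\cdot\OPT$, and the residual players reach $\alpha\cdot\OPT$ by the greedy guarantee.

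Finally, the reduction runs in polynomial time, since it needs at most $m$ rounds and each round scans the items with value-oracle calls.
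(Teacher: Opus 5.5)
Your proposal is correct and is essentially the paper's argument: the paper removes all $k$ big items at once, keeps $\max(0,m-k)$ players of an optimal allocation that hold no big item (so the small items still certify value at least $\OPT$), runs greedy on that reduced instance, and adds back $k$ singleton players. Your one-item-at-a-time recursion, with $A^*_p\setminus\{j\}$ redistributed, is the same reduction done iteratively. Your explicit point that the greedy guarantee is needed for the fixed threshold $\alpha\cdot\OPT$, whenever $\OPT$ is at most the residual instance's true optimum, is something the paper leaves implicit, and your justification of it is sound.
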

\begin{proof}
    We show that if the truncated max-sum greedy gives an $\alpha$-approximation on instances where for all items $j \in J$, $f(j)<\alpha \cdot \OPT$, then it also gives an $\alpha$-approximation for the general case.

    Let $\OPT$ denote the optimal value in the general instance.
    Assume that there are $k$ \emph{big} items $j$ with $f(j)\geq \alpha \cdot \OPT$.
    In an optimal allocation, these items can be given to at most $k$ players.
    Thus, there exists an instance with $\max(0,m-k)$ players and a subset $J' \subseteq \{j \in J \mid f(j)<\alpha\cdot\OPT\}$ of items that achieve at least $\OPT$.

    If greedy is performed on the instance with $\max(0,m-k)$ players and without the big items, then it will obtain a solution of value at least $\alpha \cdot \OPT$ (since the optimal solution on this reduced instance has value $\OPT$, as argued above).
    Adding $k$ players, each containing one big item, gives a solution of value at least $\alpha \cdot \OPT$ to the general instance.
\end{proof}

As a warm-up, we first show that we obtain a $\frac{1}{3}$-approximation when setting $\alpha=\frac{1}{3}$.
Afterwards, we will show how to improve the factor to $\frac{2}{5}$ using a more refined analysis.

    Consider a greedy allocation $A$ and an optimal allocation $A^*$.
    Let $q$ be a min player in $A$, i.e. $f(A_q)=f_{\sf min}(A)$.
    We may assume $n \geq m$, since otherwise the optimal value is $0$.
    We first show the following simple but crucial lemma.
    \begin{lemma} \label{lem:help-max-sum-greedy}
        Assume $f(A_q)<\alpha \cdot \OPT$. Then,
        $$\sum_{j \in A^*} \Delta_f(j \mid A_q) \leq \sum_{p \in P-\{q\}}f(A_p)$$
    \end{lemma}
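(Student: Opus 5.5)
Since $f(A_q)<\alpha\cdot\OPT$ at termination, player $q$ was eligible throughout the whole run of the while loop. The loop therefore stopped only because $J$ became empty. Hence every item is allocated in $A$: each $j\in A^*$ lies either in $A_q$ or in some $A_p$ with $p\neq q$. Items in $A_q$ contribute $\Delta_f(j\mid A_q)=0$, so the left-hand side equals $\sum_{p\neq q}\sum_{j\in A^*\cap A_p}\Delta_f(j\mid A_q)$. It suffices to show, for each $p\neq q$,
\[
\sum_{j\in A_p}\Delta_f(j\mid A_q)\le f(A_p).
\]
Nonnegativity of marginals (monotonicity) lets us sum over all of $A_p$ rather than only $A^*\cap A_p$.

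The plan is a charging argument in the order items were added to $A_p$. Write $A_p=\{j_1,\dots,j_r\}$ in insertion order, and let $A_p^{(i)}$ and $A_q^{(i)}$ denote the bundles just before $j_i$ was added. We will show
\[
\Delta_f(j_i\mid A_q)\le \Delta_f(j_i\mid A_p^{(i)}),
\]
and summing over $i$ telescopes to $f(A_p)$.

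For $i\ge 2$, i.e.\ items added in the loop, the greedy picked the pair $(j_i,p)$. At that moment $q$ was eligible, since its value only grows and ends below $\alpha\cdot\OPT$, and $j_i$ was available. Greedy optimality then gives $\Delta_f(j_i\mid A_p^{(i)})\ge \Delta_f(j_i\mid A_q^{(i)})$. Since $A_q^{(i)}\subseteq A_q$, submodularity gives $\Delta_f(j_i\mid A_q^{(i)})\ge\Delta_f(j_i\mid A_q)$, which is the claim.

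For the initial item $j_1$ we have $\Delta_f(j_1\mid A_p^{(1)})=f(j_1)$. We must show $\Delta_f(j_1\mid A_q)\le f(j_1)$, which holds immediately by submodularity since $A_q\supseteq\emptyset$. So the first round needs no comparison at all.

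The step requiring the most care is making sure $q$ was a legitimate competitor at every greedy step. This rests on the monotone growth of $f(A_q)$ together with the final bound $f(A_q)<\alpha\cdot\OPT$. The same fact is what guarantees that all items of $A^*$ end up allocated.
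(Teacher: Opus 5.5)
Your proof is correct and takes essentially the same route as the paper. For $j\in A_p$ you bound $\Delta_f(j\mid A_q)$ by $\Delta_f(j\mid A_q^{(j)})$ via submodularity, then by $\Delta_f(j\mid A_p^{(j)})$ via the greedy choice, and telescope the sum to $f(A_p)$.

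You are slightly more careful than the paper in two places. You justify that every item is allocated: since $q$ stays eligible, the loop can only stop when $J$ is empty. You also treat each player's initial item separately, where the greedy-choice justification does not literally apply but $\Delta_f(j_1\mid A_q)\le f(j_1)$ holds directly by submodularity.
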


    \begin{proof}
        For any item $j$, let $A^{(j)}$ denote the partial allocation of the algorithm immediately before $j$ is allocated. 
        We have $\sum_{j \in A_q} \Delta_f(j \mid A_q)=0$. For all other $p \in P-\{q\}$, we have
        \begin{align}
            \sum_{j \in A_p} \Delta_f(j \mid A_q)
            \leq \sum_{j \in A_p}\Delta_f(j \mid A^{(j)}_q)
             \leq \sum_{j \in A_p} \Delta_f(j \mid A^{(j)}_p)
            = f(A_p)\label{eq:delta-smaller-p}
        \end{align}
        The first inequality follows from the submodularity of $f$, since $A^{(j)}_q \subseteq A_q$.
        For the second inequality, notice that the algorithm always chooses to allocate an item to the player $p$ with maximal marginal contribution, as long as $f(A_p)<\alpha \cdot \OPT$.
        Therefore, 
        \begin{align}
                    \sum_{j \in A^*} \Delta_f(j \mid A_q) = \sum_{p \in P-\{q\}} \sum_{j \in A_p} \Delta_f(j \mid A_q) + \sum_{j \in A_q} \Delta_f(j \mid A_q) \leq \sum_{p \in P-\{q\}} f(A_p).\label{eq:0}
        \end{align}
    \end{proof}
    We also need an upper bound on $f(A_p)$. The following lemma is sufficient for the $\frac{1}{3}$-approximation, but later we will need a stronger upper bound for the $\frac{2}{5}$-approximation.
    \begin{lemma}\label{lem:help2-max-sum-greedy}
        For all players $p \in P$, $f(A_p) \leq 2\cdot \alpha \cdot \OPT$.
    \end{lemma}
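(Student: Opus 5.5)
We want $f(A_p) \le 2\alpha\,\OPT$ for every player $p$. The plan is to look at the last item added to $A_p$ and bound the value just before and the increment separately.

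First consider players who received only the initial item in the first step of the algorithm. By \cref{obs:small-items}, every item satisfies $f(j) < \alpha\cdot\OPT$. So such a player has $f(A_p) = f(j) < \alpha\cdot\OPT \le 2\alpha\cdot\OPT$.

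Now consider a player $p$ who received at least one item during the while loop. Let $j$ be the last item assigned to $p$, and let $A^{(j)}_p$ be $p$'s bundle immediately before $j$ was added.
\begin{itemize}
\item Since $p$ was eligible to be chosen by the greedy at that moment, the loop condition gives $f(A^{(j)}_p) < \alpha\cdot\OPT$.
\item By submodularity and $f(\emptyset)=0$, we have $\Delta_f(j \mid A^{(j)}_p) \le f(j)$.
\item By \cref{obs:small-items}, $f(j) < \alpha\cdot\OPT$.
\end{itemize}
Putting these together,
\[
f(A_p) = f(A^{(j)}_p) + \Delta_f(j \mid A^{(j)}_p) < \alpha\cdot\OPT + \alpha\cdot\OPT = 2\alpha\cdot\OPT .
\]
After $j$ is added, $p$ never receives another item, because $j$ was the last one. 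Hence $f(A_p)$ stays at this value.

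The only step needing care is the appeal to \cref{obs:small-items}. That observation lets us assume without loss of generality that every item satisfies $f(j) < \alpha\cdot\OPT$. The bound on the single-item marginal relies on exactly this assumption; without it, one big item could push $f(A_p)$ arbitrarily high. In the reduced instance the observation constructs, the bound holds as stated. The players with big items that are added back in that reduction are outside this lemma's scope, since the analysis is carried out on the reduced instance.
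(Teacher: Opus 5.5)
Your proof is correct and is essentially the paper's argument. The paper also writes $f(A_p)$ as the value just before the threshold-crossing item (which is below $\alpha\cdot\OPT$ by the loop condition) plus that item's marginal contribution, bounded by $f(j)\le\alpha\cdot\OPT$ via \cref{obs:small-items}; taking the last item instead of the crossing item, and treating initial-item-only players separately, are only cosmetic differences.
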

    \begin{proof}
        Player $p$ can only be allocated one item $j$ that surpasses $\alpha \cdot \OPT$. This item has size $ \leq \alpha \cdot \OPT$, by \cref{obs:small-items}.
        Let $A'$ be the partial allocation before $j$ is allocated to $p$. Then,
        $$f(A_p)=f(A'_p)+\Delta_f(j \mid A'_p)\leq \alpha\cdot\OPT + f(j) \leq 2\cdot\alpha \cdot \OPT$$
    \end{proof}
    We can now prove the approximation ratio of $\frac{1}{3}$ by the following lemma.
    \begin{lemma}\label{lem:1-3-approx}
        $f_{\sf min}(A^*) < f(A_q) + \frac{1}{m} \sum_{p \in P} f(A_p)$.
    \end{lemma}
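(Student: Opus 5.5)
The plan is to work under the standing assumption of \cref{lem:help-max-sum-greedy}, namely $f(A_q) < \alpha \cdot \OPT$, since that is the case in which the lemma is needed. The argument averages the optimal values and charges each optimal bundle to the min player $q$.

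First, I would note that all items are allocated by the greedy. The while loop only stops when $J=\emptyset$ or when every player has reached $\alpha\cdot\OPT$. The second option is excluded because $f(A_q)<\alpha\cdot\OPT$. Hence every item of $A^*$ lies in some $A_p$, which is exactly the setting in which the sum in \cref{lem:help-max-sum-greedy} is taken.

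Next, for each player $p$, monotonicity and submodularity give
$$f(A^*_p) \le f(A^*_p\cup A_q) \le f(A_q) + \sum_{j\in A^*_p}\Delta_f(j\mid A_q).$$
Since the minimum is at most the average, I would sum over $p$ and apply \cref{lem:help-max-sum-greedy}:
$$m\cdot f_{\sf min}(A^*) \le \sum_{p\in P} f(A^*_p) \le m\cdot f(A_q) + \sum_{j\in A^*}\Delta_f(j\mid A_q) \le m\cdot f(A_q) + \sum_{p\in P-\{q\}} f(A_p).$$
Dividing by $m$ gives $f_{\sf min}(A^*) \le f(A_q) + \frac1m\sum_{p\ne q} f(A_p)$.

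The only delicate point is the strict inequality. If $f(A_q)>0$, then $\sum_{p\neq q} f(A_p) < \sum_{p} f(A_p)$, and the claim follows. If $f(A_q)=0$, the same chain combined with \cref{lem:help2-max-sum-greedy} yields
$$m\cdot\OPT \le (m-1)\cdot 2\alpha\cdot\OPT.$$
For $\alpha\le\frac12$ this is strictly less than $m\cdot \OPT$ whenever $\OPT>0$, which is a contradiction. If instead $\OPT=0$, then $f(A_q)<\alpha\cdot\OPT$ is impossible, so this case does not arise under the assumption. I expect this edge-case bookkeeping for strictness to be the only obstacle; the main inequality is a direct combination of submodularity with \cref{lem:help-max-sum-greedy}.
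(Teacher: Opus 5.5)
Your argument is correct, and its main chain is the paper's. Monotonicity and submodularity give $f(A^*_p)\le f(A_q)+\sum_{j\in A^*_p}\Delta_f(j\mid A_q)$, the minimum is bounded by the average, and \cref{lem:help-max-sum-greedy} finishes. Your remark that greedy allocates every item, because $q$ never reaches the threshold, is exactly what the first equality in (\ref{eq:0}) uses implicitly.

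The only real difference is how you get the strict inequality. The paper obtains it directly from $f(A_q)>0$, since the first step of the algorithm hands every player one of the $m$ largest items. You instead exclude $f(A_q)=0$ by a counting contradiction through \cref{lem:help2-max-sum-greedy}. Your route is valid, but it requires $\alpha\le\frac12$ and imports \cref{lem:help2-max-sum-greedy}, and with it the reduction of \cref{obs:small-items}. That covers both places the lemma is used ($\alpha=\frac13$ and $\alpha=\frac25$), but not the whole range $\alpha\le 1$ fixed at the start of the section.

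The paper's route needs neither restriction once it is made precise. The hypothesis $f(A_q)<\alpha\cdot\OPT$ forces $\OPT>0$. By subadditivity, each of the $m$ disjoint optimal bundles then contains an item of positive value, so all $m$ largest singletons are positive. The item $q$ receives in the first step therefore already gives $f(A_q)>0$, for every $\alpha$. This also fills in the paper's brief appeal to $n\ge m$, which by itself does not rule out zero-valued items.
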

    \begin{proof}
        \begin{align}
            f_{\sf min}(A^*) &\leq \min_{p \in P} f(A^*_p \cup A_q) \label{eq:1}\\
            & \leq \min_{p \in P} f(A_q) + \sum_{j \in A^*_p} \Delta_f(j \mid A_q)\nonumber\\
            & \leq f(A_q) + \frac{1}{m}\sum_{p \in P} \sum_{j \in A^*_p} \Delta_f(j \mid A_q)\nonumber\\
            & = f(A_q) + \frac{1}{m}\sum_{j \in A^*} \Delta_f(j \mid A_q)\nonumber\\
            & \leq f(A_q) + \frac{1}{m}\sum_{p \in P-\{q\}}f(A_p)\label{eq:4}\\
            & < f(A_q) + \frac{1}{m}\sum_{p \in P}f(A_p)\label{eq:5}
        \end{align}
    (\ref{eq:1}) uses monotonicity of $f$, (\ref{eq:4}) uses \cref{lem:help-max-sum-greedy} and (\ref{eq:5}) holds since $f(A_q)>0$ by our assumption $n \geq m$ and the fact that greedy starts by allocating one item to each player.
    \end{proof}
        To see that when setting $\alpha=\frac{1}{3}$, this indeed implies the approximation ratio of $\frac{1}{3}$, substitute $\frac{1}{m}\sum_{p \in P} f(A_p)\leq 2\cdot\alpha \cdot \OPT$ by \cref{lem:help2-max-sum-greedy}.
    Taking $2\cdot\alpha \cdot \OPT=\frac{2}{3}\cdot \OPT$ on the other side, we obtain $\frac{1}{3}\OPT < f(A_q)$.

\begin{remark}
    This algorithm extends to the case where we are additionally given a \emph{global cardinality constraint}. 
    That is, we must choose a subset of at most $k$ items to allocate.

    To see this, consider again an optimal allocation $A^*$ and the greedy allocation $A$. Notice that in our new setting, we might have $\bigcup_{p \in P} A_p \neq \bigcup_{p \in P} A^*_p$ and therefore the first equality in (\ref{eq:0}) no longer holds.
    However, we can circumvent this issue as follows. We can assume that both $A$ and $A^*$ contain exactly $k$ items by monotonicity.
    Let $b: \bigcup_{p \in P} A^*_p \rightarrow \bigcup_{p \in P} A_p$ be any bijection such that $b(j)=j$ for all $j \in (\bigcup_{p \in P} A^*_p) \cap (\bigcup_{p \in P} A_p)$.
    For all $j \in A$, let $A^{(j)}$ be the partial greedy allocation exactly before allocating $j$.
    Consider any $j \in A^*$ and let $b(j) \in A_p$. We have
    \begin{align}
        \Delta_f(j \mid A_q) \leq \Delta_f(j \mid A_q^{(b(j))}) \leq \Delta_f(b(j) \mid A_p^{(b(j))}) \label{eq:card-1}
    \end{align}
    
    using that when $b(j)$ is allocated, item $j$ is still available and thus greedy must prefer allocating $b(j)$ to $p$ than allocating $j$ to $q$.
    By (\ref{eq:card-1}) and the fact that $b$ is a bijection,
    \begin{align*}
            \sum_{j \in A^*} \Delta_f(j \mid A_q) &\leq \sum_{p \in P-\{q\}}\sum_{j \in A^*: b(j) \in A_p} \Delta_f(b(j) \mid A_p^{(b(j))}) \\
            &= \sum_{p \in P-\{q\}}\sum_{j \in A_p} \Delta_f(j \mid A_p^{(j)}) \\
            &= \sum_{p \in P-\{q\}} f(A_p)
    \end{align*}
    reproving \cref{lem:help-max-sum-greedy} in this more general setting. The rest of the analysis can be done as for the unconstrained case.
    Therefore, \cref{prop:1-3-cardinality} holds.
\end{remark}

\subsection{Ratio of 2/5}
We now prove that if $\alpha=\frac{2}{5}$, then $f_{\sf min}(A) \geq \frac{2}{5}\OPT$.
To this end, suppose for contradiction that there exists a counterexample to this claim.
We assume that this instance $(P,J,f)$ is minimal with respect to $|P|$ and that $f$ is scaled in such a way that $f_{\sf min}(A^*)=\OPT=5$.
We also fix a min player $q$ with $f(A_q) \leq 2$. Note that we could have assumed $f(A_q)<2$ but this is not necessary for the proof.

Let us define ${\sf avg}:=\frac{1}{m}\sum_{p \in P} f(A_p)$.
 From \cref{lem:1-3-approx} we obtain ${\sf avg} > 3$.
Our goal is to show that ${\sf avg} \leq 3$, thereby arriving at a contradiction.

Letting $a_p^i$ denote the $i$-th item allocated to $p$, define $\beta:=\max_{p \in P: |A_p| \geq 2} \Delta_f(a_p^2 \mid \{a_p^1\})$ as the largest marginal contribution of an item during the greedy algorithm which is not the first item of a player.
Notice that it suffices to consider $a_p^2$, since the marginal contributions of later items of a player can only be smaller.
  If $\beta \leq 1$, then no player can obtain more than value $3$, so we can assume $\beta>1$.
  Also, $\beta \leq 2$ holds by \cref{obs:small-items}.
  Notice finally that for $p \in P$, we have $f(a_p^1)\geq\beta$, as otherwise the maximizer $a_p^2$ in the definition of $\beta$ would have been among the largest $m$ items, and could not be the second item of a player.
\begin{lemma}\label{lem:bigger-subset}
    For each set of $k \geq 1$ players $p_1,...,p_k \in P-\{q\}$, there is no set of $k$ players $r_1,...,r_k \in P$ such that $\bigcup_{i=1}^k A_{p_i} \subseteq \bigcup_{i=1}^k A^*_{r_i}$.
\end{lemma}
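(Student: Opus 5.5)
Suppose for contradiction that such players exist, i.e. $S:=\bigcup_{i=1}^k A_{p_i}\subseteq \bigcup_{i=1}^k A^*_{r_i}$ with $q\notin\{p_1,\dots,p_k\}$. The plan is to contradict the minimality of the counterexample $(P,J,f)$ with respect to $|P|$. To do so, I construct a smaller instance by deleting the players $p_1,\dots,p_k$ together with the items in $S$, and I keep the valuation $f$ restricted to $J\setminus S$. The new instance has $m-k\ge 1$ players, since $q$ survives.

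\textbf{Step 1: the optimum does not drop.} Take the optimal allocation $A^*$, delete the configurations $A^*_{r_1},\dots,A^*_{r_k}$, and remove from every other configuration all items of $S$. Because $S\subseteq\bigcup_i A^*_{r_i}$ and the $A^*_p$ are disjoint, the remaining configurations $A^*_p$ with $p\notin\{r_1,\dots,r_k\}$ do not meet $S$ at all. So they stay intact and each still has value at least $5$. Relabelling the $m-k$ surviving optimal bundles onto the $m-k$ surviving players, the reduced instance has optimum at least $5=\OPT$. (I would also allow $\OPT'\ge\OPT$ and run greedy with the same threshold $\alpha\cdot 5$; alternatively I would note that the argument only needs the threshold to be at most $\alpha\OPT'$. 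I will say that the statement and minimality are taken with threshold $2$.)

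\textbf{Step 2: greedy on the reduced instance behaves identically on the remaining players.} This is the main obstacle: I must check that truncated max-sum greedy run on $(P\setminus\{p_i\},J\setminus S,f)$ produces exactly $A_p$ for every surviving player $p$. The initial phase gives each player one of the $m$ largest items. The removed items include the $k$ first items $a^1_{p_i}$. So the $m-k$ largest items of $J\setminus S$ are exactly the first items of the surviving players, and I may fix tie-breaking consistently. In the main loop, I replay the original execution step by step, skipping steps that allocate to some $p_i$. At a non-skipped step, the original argmax over available items and active players was the pair $(j^*,p^*)$ with $p^*$ surviving. In the reduced run, the candidate set is a subset: the surviving players have the same bundles, and the available items are the original available items minus $S$. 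Since $j^*\notin S$, this pair is still present and still maximal. Both loops also terminate consistently, because in the original run every surviving player with value below $2$ keeps receiving items until $J$ runs out, and the items outside $S$ are exactly those handed to survivors. Hence the reduced greedy allocation is $(A_p)_{p\notin\{p_i\}}$.

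\textbf{Step 3: conclude.} In the reduced instance, player $q$ still has $f(A_q)\le 2<\frac25\cdot\OPT'$. So the reduced instance is again a counterexample, with fewer players, which contradicts minimality. The delicate points are Step 2's tie-breaking and the exact termination condition. I would handle ties by fixing a global tie-breaking order on pairs, used in both runs.
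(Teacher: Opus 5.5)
Your proposal is correct and follows the paper's proof. The paper also deletes $p_1,\dots,p_k$ together with the items of $S$, observes that the optimal bundles $A^*_p$ with $p\notin\{r_1,\dots,r_k\}$ avoid $S$ and still have value at least $5$, and replays greedy on the survivors. Your handling of the initial phase, of tie-breaking, and of fixing the threshold at $2$ (so that a possibly larger $\OPT'$ cannot change greedy's behaviour) fills in details the paper leaves implicit. One small slip: in Step 3 the chain should read $f(A_q)<2\le\frac25\OPT'$, not $f(A_q)\le 2<\frac25\OPT'$. Moreover, $f(A_q)<2$, which holds in any genuine counterexample, is what forces the original run to exhaust $J$, and that is the right justification for your termination claim (players below $2$ need not individually keep receiving items).
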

\begin{proof}
    Removing $k$ players and all items in $\bigcup_{i=1}^k A_{p_i}$ from  the instance leads to a smaller counter-example:
    The greedy algorithm still behaves identically on all other players, including $q$, so there is a valid greedy allocation $A'$ on the instance with $f_{\sf min}(A') \leq 2$.
    On the other hand, since all removed items only belonged to $k$ players in the optimal solution, removing these $k$ players gives a solution $A^{**}$ with $f(A^{**}_p) \geq 5$ for all remaining players $p$.
\end{proof}

The above lemma allows us to prove the existence of a ``fixed'' item for each player:
\begin{lemma}\label{lem:hall}
    There is a permutation $\phi:P \rightarrow P$ such that $A_p \cap A^*_{\phi(p)} \neq \emptyset$ for all $p\in P$.
\end{lemma}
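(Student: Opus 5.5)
We apply Hall's marriage theorem to the bipartite graph $G$ with one copy of $P$ on the left (greedy players $p$, with bundles $A_p$) and one copy of $P$ on the right (optimal players $r$, with bundles $A^*_r$). We join $p$ to $r$ whenever $A_p \cap A^*_r \neq \emptyset$. A perfect matching in $G$ is exactly the desired permutation $\phi$. Since both sides have size $m$, it suffices to verify Hall's condition: every set $S$ of left vertices has $|N(S)| \geq |S|$.

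Suppose for contradiction that some $S \subseteq P$ with $|S|=k \geq 1$ has $|N(S)| < k$. First we check that every item of $\bigcup_{p\in S} A_p$ lies in some optimal bundle. Since $A^*$ is an allocation, every item of $J$ belongs to exactly one $A^*_r$. Hence every item in $A_p$ for $p \in S$ lies in $A^*_r$ for some $r \in N(S)$, so $\bigcup_{p \in S} A_p \subseteq \bigcup_{r \in N(S)} A^*_r$. We may pad $N(S)$ with arbitrary further players to a set $R$ of exactly $k$ distinct players, since $|N(S)|<k\le m$. Monotonicity of unions then gives $\bigcup_{p\in S} A_p \subseteq \bigcup_{r\in R} A^*_r$. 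If $q \notin S$, this directly contradicts \cref{lem:bigger-subset}.

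The remaining case, $q \in S$, is the main obstacle. If $|S| \geq 2$, we drop $q$ and use $S' = S - \{q\}$, which has $|S'| = k-1 \geq 1$ and $N(S') \subseteq N(S)$, so $|N(S')| \le k-1$. This is not necessarily a strict violation, so we still need a separate argument. For that we use the fact that every greedy bundle is nonempty, because greedy first hands one item to each player. So every $p$ has at least one neighbour, which handles $|S|=1$ directly.

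For general $S \ni q$, the cleanest route is to argue by contradiction with a minimal violating set. Take $S$ violating Hall's condition with $|S|$ minimal. If $q\in S$ and $k\ge 2$, minimality gives $|N(S')| \geq k-1$ for $S'=S-\{q\}$, hence $|N(S')| = k-1 = |N(S)|$, so $N(S) = N(S')$. Now $\bigcup_{p\in S'}A_p\subseteq \bigcup_{r\in N(S')}A^*_r$ with $|N(S')|=k-1=|S'|$ and $q\notin S'$, which contradicts \cref{lem:bigger-subset}. The case $k=1$ is handled by nonemptiness as above.

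Hence Hall's condition holds, a perfect matching exists, and it defines the permutation $\phi$.
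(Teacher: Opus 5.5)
Your proof is correct and follows the same route as the paper: Hall's theorem on the bipartite ``shares an item'' graph, \cref{lem:bigger-subset} (with padding) for sets avoiding $q$, nonemptiness of $A_q$ for $\{q\}$, and deleting $q$ for larger sets containing $q$. The minimal-violating-set detour is harmless but unnecessary. \cref{lem:bigger-subset} with padding already rules out $|N(S')|\le |S'|$ for $q\notin S'$, i.e.\ it gives $|N(S')|\ge |S'|+1$, which is exactly how the paper concludes $|N(S)|\ge|N(S-\{q\})|\ge |S|$.
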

\begin{proof}
    Consider the bipartite graph $G=(P \cup P',E)$ with the set of players $P$ on the left side and a copy $P'$ of the set of players on the right side, and edges $(p,p')$ if $A_p \cap A^*_{p'} \neq \emptyset$.
    Let $N(Q) = \bigcup_{q \in Q} \{v \in V \mid (q,v) \in E\}$ be the neighbourhood of $Q$.
    Consider an arbitrary non-empty subset $Q \subseteq P-\{q\}$. By \cref{lem:bigger-subset}, $|N(Q)| \geq |Q| + 1$.
    For the set $\{q\}$, we have $|N(\{q\})| \geq 1$ since $A_q \neq \emptyset$ and each item in $q$ must be allocated somewhere in the optimal solution.
    For the set $\emptyset$, we have $|N(\emptyset)| \geq 0=|\emptyset|$.
    For any other set $\{q\} \subset Q \subseteq P$, we know that $|N(Q)| \geq |N(Q-\{q\})| \geq |Q-\{q\}|+1=|Q|$.
    Altogether, we obtain that for any subset $Q \subseteq P$, $|N(Q)| \geq |Q|$. We can hence apply Hall's Marriage Theorem to conclude that $G$ contains a perfect matching.
    This matching defines exactly the desired permutation $\phi$. 
\end{proof}
To simplify notation, we consider an optimal solution such that $\phi$ is the identity function, by just permuting the allocations of the players in the optimal solution.
We now define the \emph{reallocation digraph} $G=(P,E)$: Each pair of players $(p,r) \in P^2$ is associated to a set $A_{p \rightarrow r}=A_p \cap A^*_{r}$.
An arc $(p,r)$ exists if $A_{p \rightarrow r} \neq \emptyset$. By \cref{lem:hall}, each vertex has a self-loop.

We subdivide the players of our instance into two categories $P=B \cup S$.
The set $B$ of \emph{big} players consists of all players $b$ such that $f(A_b) > 4-\beta$. 
The set $S$ of \emph{small} players consists of all players $p$ with $f(A_p) \leq 4-\beta$.

Notice that $q \in S$.
This partition is mainly motivated by the following observation:
\begin{observation}\label{obs:exactly-two}
    For any $p \in B $, it holds that $|A_p|=2$.
\end{observation}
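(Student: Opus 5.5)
We show $|A_p|\ge 2$ and $|A_p|\le 2$ for every $p\in B$. Throughout we use the normalization $\OPT=5$, $\alpha=\frac25$ (so the threshold is $\alpha\cdot\OPT=2$), and $1<\beta\le 2$.

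\emph{Lower bound $|A_p|\ge 2$.} By \cref{obs:small-items}, every item satisfies $f(j)<2$. Since $\beta\le 2$, we get $f(j)<2\le 4-\beta$. A player with a single item therefore has value at most $4-\beta$ and lies in $S$. Hence every $p\in B$ has at least two items.

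\emph{Upper bound $|A_p|\le 2$.} Let $p\in B$. Greedy adds an item to $p$ only while $f(A_p)<2$. It therefore suffices to show $f(\{a_p^1,a_p^2\})\ge 2$, since then no third item is ever given to $p$.

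Suppose instead $f(\{a_p^1,a_p^2\})<2$, so that $p$ receives a third item. Once $p$ has value at least $2$ it receives nothing more, so the last item $a_p^k$, with $k\ge 3$, is added while the value is still below $2$. This gives
$$f(A_p)<2+\Delta_f(a_p^k\mid A_p\setminus\{a_p^k\}).$$
By submodularity, the marginal of $a_p^k$ is at most $\Delta_f(a_p^k\mid\{a_p^1\})$. Consider the moment $a_p^2$ was chosen. At that time $a_p^k$ was still available and could also have been added to $p$, whose set was then $\{a_p^1\}$. Since greedy picks the pair with maximal marginal, $\Delta_f(a_p^k\mid\{a_p^1\})\le\Delta_f(a_p^2\mid\{a_p^1\})\le\beta$. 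This alone yields only $f(A_p)<2+\beta$, which does not contradict $f(A_p)>4-\beta$ (as $2+\beta\ge 4-\beta$ when $\beta\ge1$). So we need a sharper estimate.

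For $k\ge 3$, both $a_p^2$ and $a_p^k$ are added after $a_p^1$, so submodularity gives
$$\Delta_f(a_p^k\mid A_p\setminus\{a_p^k\})\le f(\{a_p^1,a_p^2,\dots,a_p^{k-1}\}\cup\{a_p^k\})-f(\{a_p^1,\dots,a_p^{k-1}\}).$$
We bound this by splitting the value already held at step $k-1$. Write $v:=f(\{a_p^1,\dots,a_p^{k-1}\})<2$. The first item contributes $f(a_p^1)\ge\beta$. Hence the items $a_p^2,\dots,a_p^{k-1}$ together contribute $v-f(a_p^1)<2-\beta$. In particular the marginal of $a_p^2$ given $\{a_p^1\}$ is less than $2-\beta$. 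That marginal is the maximum available marginal for $p$ at that step, and the marginal of $a_p^k$ at the end is no larger. Therefore
$$f(A_p)<2+(2-\beta)=4-\beta,$$
contradicting $p\in B$. Thus $f(\{a_p^1,a_p^2\})\ge 2$ and $|A_p|=2$.

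The main obstacle is this sharper bound on the last marginal. The argument depends on combining $f(a_p^1)\ge\beta$ with the greedy maximality at the step where $a_p^2$ is chosen, together with the fact that the value before the last item is below $2$.
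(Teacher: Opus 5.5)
Your proof is correct and follows essentially the same route as the paper. You exclude a single item via $f(j)<2\le 4-\beta$. For $|A_p|=k\ge 3$ you combine $f(\{a_p^1,\dots,a_p^{k-1}\})<2$ with $f(a_p^1)\ge\beta$ to get $\Delta_f(a_p^2\mid\{a_p^1\})<2-\beta$, and then bound the last marginal by submodularity plus greedy maximality at the step where $a_p^2$ was chosen, which gives $f(A_p)<4-\beta$. The detour through the weaker bound $\beta$ can be dropped, as can the displayed ``submodularity'' inequality, which is just an identity since $A_p\setminus\{a_p^k\}=\{a_p^1,\dots,a_p^{k-1}\}$. Note also that your step ``so that $p$ receives a third item'' silently uses $p\in B$, since otherwise $p$ could stop at two items with $f(A_p)<2$.
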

\begin{proof}
    Any player $p \in B $ cannot be allocated just one item $j$ since $f(j)<2 \leq 4-\beta \leq f(A_p)$.
    If $p$ was allocated $k$ items for $k>2$, then $f(\{a_p^1,\dots,a_p^{k-1}\})<2$ must hold in order for the greedy algorithm to allocate $a_p^k$ to $p$.
    But since $f(a_p^1) \geq \beta$, we must have $\Delta_f(a_p^k \mid \{a_p^1,\dots,a_p^{k-1}\}) \leq \Delta_f(a_p^2 \mid \{a_p^1\}) \leq 2 - \beta$.
    So we have $$f(A_p)=f(A_p-a_p^k) + \Delta_f(a_p^k \mid \{a_p^1,\dots,a_p^{k-1}\})< 2 + 2-\beta\leq f(A_p),$$ contradicting our assumption $k>2$.
\end{proof}

For each $p \in B$, denote by $x_p$ the item in $A_{p \rightarrow p}$ and denote by $y_p$ the other item of $A_p$.
\Cref{lem:hall} guarantees the existence of the first item and \cref{obs:exactly-two} implies that there is exactly one other item. \Cref{lem:bigger-subset} guarantees that the other item is not in $A_{p \rightarrow p}$.
\begin{figure}
    \usetikzlibrary{arrows.meta}

\begin{tikzpicture}[>=Stealth, node distance=3cm, every node/.style={circle, draw, minimum size=1cm}, auto]

  \node (A) at (0,0) {$b$};
  \node[color=gray] (B) at (3,0) {$c$};
  \node (C) at (6,0) {$p$};

   \draw[->] (A) edge[loop above] node[above=2pt,fill=white, draw=none, inner sep = 0pt, minimum size = 0pt, outer sep = 0pt] {$x_b$} (A);
  \draw[->, color=gray] (B) edge[loop above] node[above=2pt,fill=white, draw=none, inner sep = 0pt, minimum size = 0pt, outer sep = 0pt] {$x_c$} (B);

  \draw[->, color=gray] (A) edge node[above=2pt,fill=white, draw=none, inner sep = 0pt, minimum size = 0pt, outer sep = 0pt] {$y_b$} (B);
  \draw[->, color=gray] (B) edge node[above=2pt,fill=white, draw=none, inner sep = 0pt, minimum size = 0pt, outer sep = 0pt] {$y_c$} (C);

  \draw[->, dashed, red] (A) edge[bend left=60] node[above=2pt,fill=white, draw=none, inner sep = 0pt, minimum size = 0pt, outer sep = 0pt] {$z$} (C);

\end{tikzpicture}
    \centering
    \caption{Visualization of \cref{lem:no-edge}. }
\end{figure}
\begin{lemma}\label{lem:no-edge}
    For any  $b,c \in B$ with $b \neq c$, $(b,c)\not\in E$.
\end{lemma}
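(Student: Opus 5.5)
We argue by contradiction. Suppose $b \neq c$ are big players with $(b,c) \in E$. By \cref{obs:exactly-two} we have $A_b=\{x_b,y_b\}$ with $x_b \in A_{b \rightarrow b}$, so the item in $A_{b\rightarrow c}$ must be $y_b \in A^*_c$. The first step is to pin down the shape of the neighbourhood. Suppose $y_c \in A^*_b \cup A^*_c$. Then $A_b \cup A_c \subseteq A^*_b \cup A^*_c$, which contradicts \cref{lem:bigger-subset} with $k=2$, $p_1=b$, $p_2=c$; this applies since $b,c \in B$ and $q \in S$ give $b,c \neq q$. Hence $y_c \in A^*_p$ for some third player $p \notin \{b,c\}$. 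This is the chain $b \xrightarrow{y_b} c \xrightarrow{y_c} p$ of the figure.

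The second step is to build a smaller counterexample in the spirit of \cref{lem:bigger-subset}, which contradicts the minimality of $|P|$. We delete the players $b,c$ together with the items $A_b \cup A_c = \{x_b,y_b,x_c,y_c\}$. Greedy behaves identically on the remaining players, so $q$ still has value at most $2$. For the optimal side, we keep $A^*_r$ for every $r \notin\{b,c,p\}$. Player $p$ loses $y_c$, and we compensate $p$ with the leftover items $R := (A^*_b \cup A^*_c)\setminus\{x_b,y_b,x_c\}$, which become free once $b$ and $c$ are removed. (These leftover items are the dashed arc $z$ in the figure.) It then suffices to show $f\big((A^*_p - y_c) \cup R\big) \geq 5$. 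By submodularity this reduces to showing that $\Delta_f(R \mid A^*_p - y_c)$ is at least $\Delta_f(y_c \mid A^*_p - y_c) \leq f(y_c)$.

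The third step, which I expect to be the main obstacle, is this quantitative comparison. My first attempt is to lower bound the value of $R$ through a single optimal bundle, say $A^*_b \setminus A_b \subseteq R$. We have $f(A^*_b)\ge 5$ and $f(A_b) \le 4$ by \cref{lem:help2-max-sum-greedy}. We also know $f(A_b)>4-\beta$, where $\beta$ bounds the marginal of $y_b$, and $f(a^1_r)\ge\beta$ for every player $r$. I then use the greedy choice rule to compare marginals. At the moment $y_c$ was given to $c$ (or, if $y_c$ is $c$'s first item, at the initial stage), every item of $R$ that was still unallocated had a marginal to $c$ that was no larger. Together with $\beta \le 2$ and $f(j) < 2$ from \cref{obs:small-items}, this should give enough value in $R \cup (A^*_c\setminus\{y_b,x_c\})$ to cover the loss of $y_c$. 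The delicate point is that items of $R$ may have been allocated earlier, to players other than $b$ and $c$, so their marginals relative to $A^*_p - y_c$ must be controlled via submodularity rather than directly.

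If the direct bound falls short, my fallback is an exchange inside the optimum instead of deletion. I would move $y_b$ from $A^*_c$ to $A^*_b$ and give $c$ suitable items of $A^*_b\setminus A_b$ in return. This produces an optimal allocation where $A_b \subseteq A^{**}_b$, contradicting \cref{lem:bigger-subset} with $k=1$; the same marginal comparison is needed to keep $f(A^{**}_c)\ge 5$.
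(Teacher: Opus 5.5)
Your first step is fine: $y_b\in A^*_c$, and \cref{lem:bigger-subset} rules out $y_c\in A^*_b\cup A^*_c$, so $y_c\in A^*_p$ for a third player $p$. The gap is the third step, which carries the whole content of the lemma and which you leave unresolved. You need a \emph{lower} bound $\Delta_f(R\mid A^*_p-y_c)\ge\Delta_f(y_c\mid A^*_p-y_c)$ for a set $R$ of genuine items. Every tool you propose yields only \emph{upper} bounds on marginals. Submodularity says marginals shrink as the base set grows. The greedy rule at the moment $y_c$ went to $c$ says that the still-available items of $R$ had marginal to $c$ \emph{at most} that of $y_c$, which is the wrong direction. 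None of the facts you invoke ($f(A^*_b)\ge 5$, $f(A_b)\le 4$, $f(a^1_r)\ge\beta$, $f(j)<2$, \dots) prevents $R$ from being completely redundant given $A^*_p-y_c$. Think of a coverage function in which $R$ covers only elements already covered by $A^*_p-y_c$; then $\Delta_f(R\mid A^*_p-y_c)=0$. Your fallback exchange has the same defect: keeping both $(A^*_c-y_b)\cup Q$ and $(A^*_b\setminus Q)\cup\{y_b\}$ at value at least $5$ again needs lower bounds on marginals of real items.

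The missing idea is to \emph{manufacture} the compensation rather than look for it among existing items.
\begin{itemize}
\item The paper deletes only player $c$ and the items $y_b,x_c,y_c$, and keeps $b$. It adds a fresh item $z$ that contributes exactly $t:=\max\bigl(f(y_b),\Delta_f(y_c\mid A^*_p-y_c)\bigr)$ to every set. By \cref{lem:subm-add} the new valuation is still monotone submodular, and $z$ changes no other item's marginal.
\item In the optimum, $A^*_c$ is discarded together with $c$, and $y_c$ is replaced by $z$ in $A^*_p$. Then $f(A^*_p-y_c)+t\ge f(A^*_p)\ge 5$ holds by construction, whatever the overlap, because $z$ is modular.
\item In the greedy run, $z$ plays the role of $y_b$. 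Greedy can replay the original run and give $z$ to $b$ once all remaining marginals are at most $t$. Afterwards $f(\{x_b,z\})\ge f(x_b)+f(y_b)\ge f(A_b)>4-\beta\ge 2$, so $b$ is switched off exactly as before and $q$ still ends with value at most $2$.
\end{itemize}
This gives a counterexample with one player fewer, contradicting minimality.
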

\begin{proof}
    Suppose for contradiction that $A_{b \rightarrow c} \neq \emptyset$, where $y_b \in A_{b \rightarrow c}$.
    Let $p \in P$ be the player such that $y_c \in A^*_p$.
    Consider a modified instance where we remove player $c$ as well as items $\{y_b,x_c,y_c\}$ and add an item $z$ which contributes an additive factor of $t:=\max(f(y_b),\Delta_f(y_c \mid A^*_p - \{y_c\}))$ to every set.
    By \cref{obs:restr} and \cref{lem:subm-add}, this function is submodular and monotone.
    Notice that this addition of $z$ does not change any marginal contribution, i.e. for $j \neq z$ we have $\Delta_f(j \mid S) = \Delta_f(j \mid S \cup \{z\})$.
    
    The greedy algorithm on the new instance may allocate all items that give marginal contribution larger than $t$ in the original run as before.
    As soon as all remaining items give marginal contribution $\leq t$, it may choose to allocate item $z$ to player $b$ (since $z$ contributes the same to each player, and greedy considers $b$ as $f(A_b)=f(\emptyset)<2$ or $f(A_b)=f(x_b)<2$).
    
    Since $b \in B$, we have $f(A_b)>4-\beta \geq 2$.
    Furthermore, $f(\{x_b,z\}) \geq f(x_b)+f(y_b) \geq f(A_b) \geq 2$.
    Consider the moment in the greedy algorithm after $z$ was allocated to $b$. If $x_b$ had already been allocated, then $b$ has reached the threshold of $2$ and will be thus ignored by the greedy algorithm. Therefore, all following items can be allocated to the same players as in the original run of the greedy algorithm.
    Now suppose $x_b$ had not yet been allocated. Since the addition of $z$ to player $b$ does not influence the marginal contributions and $f(z) <2$, the greedy algorithm can allocate all items up to $x_b$ in the same way (including $x_b$). After allocating $x_b$, player $b$ has reached the threshold of $2$ and thus again the greedy algorithm can allocate in the same way as in the original run.
    Thus, we again obtain $f_{\sf min}(A'_q)\leq 2$, for the new allocation $A'$.

    On the other hand, we can modify our old optimal allocation, by just replacing item $y_c$ by item $z$.
    Call the new optimal allocation $A^{**}$. We have
    \begin{align*}
        f(A^{**}_p)&=f(A^{**}_p - \{z\}) + t \\
        &= f(A^*_p - \{y_c\})+t \\
        &\geq f(A^*_p - \{y_c\})+ \Delta_f(y_c \mid A^*_p - \{y_c\})\\
        &=f(A^*_p) \geq 5
    \end{align*}
We obtained a counterexample with less players, a contradiction.
\end{proof}
Thus, in fact, $G[B]$ is an independent set.
We now argue that $B$ cannot be too large. 
To this end, we first prove the following two helpful lemmata.
For a player $p$, let us define $V_p:=\bigcup_{(r,p)\in E} A_{r \rightarrow p}$ the set of items reallocated \emph{to} $p$ and $W_p:=\bigcup_{(p,r)\in E} A_{p \rightarrow r}$ the set of items reallocated \emph{from} $p$.
\begin{lemma}\label{lem:give-at-most-two}
    For any player $p$, $\sum_{j \in W_p} \Delta_f(j \mid A_q)< 2$.
\end{lemma}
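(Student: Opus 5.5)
Since every vertex of the reallocation digraph has a self-loop, $W_p=\bigcup_{r}A_{p\rightarrow r}=A_p$. So the claim to prove is $\sum_{j\in A_p}\Delta_f(j\mid A_q)<2$ for every $p$.

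\textbf{Trivial cases.} For $p=q$ the sum is $0$, since every $j\in A_q$ has $\Delta_f(j\mid A_q)=0$. If $|A_p|=1$, say $A_p=\{j\}$, then $\Delta_f(j\mid A_q)\le f(j)<2$ by \cref{obs:small-items}.

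\textbf{Main case: $p\neq q$ and $|A_p|=k\ge 2$.} Write $A_p=\{a_p^1,\dots,a_p^k\}$ in allocation order. Every $a_p^i$ with $i\ge 2$ is allocated in the while loop while $q$ is still active, because $f(A_q^{(a_p^i)})\le f(A_q)\le 2$ and $q$ stays below the threshold (the borderline case $f(A_q)=2$ needs a separate check). So, exactly as in (\ref{eq:delta-smaller-p}),
\[
\Delta_f(a_p^i\mid A_q)\le \Delta_f(a_p^i\mid A_q^{(a_p^i)})\le \Delta_f(a_p^i\mid A_p^{(a_p^i)}).
\]
Summing these for $2\le i\le k-1$ and adding the first item gives
\[
\sum_{i<k}\Delta_f(a_p^i\mid A_q)\le \Delta_f(a_p^1\mid A_q)+f(A_p\setminus\{a_p^k\})-f(a_p^1)\le f(A_p\setminus\{a_p^k\})<2.
\]
The last inequality holds because greedy only gives $a_p^k$ to $p$ while $f(A_p\setminus\{a_p^k\})<2$. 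The middle step uses $\Delta_f(a_p^1\mid A_q)\le f(a_p^1)$.

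\textbf{Obstacle: the last item $a_p^k$.} Its term $\Delta_f(a_p^k\mid A_q)$ can push the total above $2$, and it cannot be absorbed as above. My first attempt is to absorb it into the slack $f(a_p^1)-\Delta_f(a_p^1\mid A_q)$ left in the first step. The aim is to show
\[
\Delta_f(a_p^1\mid A_q)+\Delta_f(a_p^k\mid A_q)\le f(a_p^1)+\bigl(\text{something strictly less than } 2-f(A_p\setminus\{a_p^k\})\bigr).
\]
The ingredients are: $\Delta_f(a_p^k\mid A_q)\le\Delta_f(a_p^k\mid\{a_q^1\})\le\beta$, and $f(a_q^1)\ge\beta$. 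Submodularity applied to the pair $a_p^1,a_p^k$ over $A_q$ compares $f(A_q\cup\{a_p^1,a_p^k\})$ with $f(A_q)+f(a_p^1)+\dots$.

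If this direct charging fails, I would fall back on minimality of the counterexample, in the style of \cref{lem:bigger-subset} and \cref{lem:no-edge}. Suppose $\sum_{j\in A_p}\Delta_f(j\mid A_q)\ge 2$. Then adding all of $A_p$ to $q$ would lift $q$ to value $f(A_q\cup A_p)$. The plan is to build a smaller instance that still has greedy value at most $2$, for example by deleting $p$ and merging $A_p$ into a single additive item as in the proof of \cref{lem:no-edge}, while the optimum stays at least $5$. Of the two routes, I expect this step, handling the last item of $p$, to be the real difficulty.
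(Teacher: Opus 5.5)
Your proof has a gap: the case of the last item $a_p^k$ is left open, and with your reading of $W_p$ it cannot be closed. The trouble starts in your first line. The formula defining $W_p$ literally ranges over all arcs, including the self-loop. But $W_p$ is meant, and used, as the set of items $p$ gives to \emph{other} players, i.e.\ $W_p=A_p\setminus A^*_p$. Later the paper writes $W_b=\{y_b\}$ for big players, uses $A_p-W_p\subseteq A^*_p$ in \cref{lem:no-more-than-it-gets}, and its own proof of this lemma starts from $W_p\subseteq A_p-\{k\}$. Your strengthened target $\sum_{j\in A_p}\Delta_f(j\mid A_q)<2$ does not follow from the greedy structure. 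For $b\in B$ we have $f(A_b)>4-\beta\ge 2$, and the sum is at least $\Delta_f(A_b\mid A_q)$, which equals $f(A_b)>2$ whenever $f(A_b\cup A_q)=f(A_b)+f(A_q)$. Indeed, \cref{lem:small-geq-big} only bounds $\Delta_f(x_b\mid A_q)+\Delta_f(y_b\mid A_q)$ by $2+\beta$. So no local charging of $a_p^k$ into the slack of the first item can work. The minimality fallback is only a sketch with no concrete reduction, and it would essentially have to reprove the theorem.

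The missing idea is to use \cref{lem:hall} to discard one item. Some $u\in A_{p\rightarrow p}$ is not in $W_p$, so $W_p\subseteq A_p\setminus\{u\}$. Let $\ell$ be the last item given to $p$.

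If $\ell=u$, your bound for the non-last items already gives $\sum_{j\in W_p}\Delta_f(j\mid A_q)\le f(A_p\setminus\{\ell\})<2$. This also covers $|A_p|=1$.

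If $\ell\neq u$, then $u$ was allocated before $\ell$, while $\ell$ was still available and $p$ was active. Greedy's choice therefore gives
\[
\Delta_f(\ell\mid A_q)\le\Delta_f(\ell\mid A_p^{(\ell)})\le\Delta_f(\ell\mid A_p^{(u)})\le\Delta_f(u\mid A_p^{(u)}).
\]
If $u=a_p^1$, the last step reads $f(\ell)\le f(u)$, which holds by the initial top-$m$ step. Charge $\ell$ to the slot of $u$, and use your inequality $\Delta_f(j\mid A_q)\le\Delta_f(j\mid A_p^{(j)})$ for the remaining items. This telescopes to $\sum_{j\in A_p\setminus\{u\}}\Delta_f(j\mid A_q)\le f(A_p\setminus\{\ell\})<2$.

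Your computation for all items except the last one is correct and matches the paper's. What is missing is removing the self-loop item and the exchange of $u$ with $\ell$.
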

\begin{proof}
    Recall that $A^{(j)}$ denotes the partial greedy allocation exactly before item $j$ is allocated.
    Let $\ell$ be the last item allocated to $A_p$.
    By \cref{lem:hall}, there exists an item $k \in A_{p \rightarrow p}$.
    Therefore, $W_p \subseteq A_p - \{k\}$ and we can write
    \begin{align*}
        \sum_{j \in W_p} \Delta_f( j \mid A_q)
        &\leq \sum_{j \in A_p-\{k\}} \Delta_f(j \mid A_q)\\
        &\leq \sum_{j \in A_p-\{k\}} \Delta_f(j \mid A_q^{(j)})\\
        &\leq \Delta_f(\ell \mid A_p^{(\ell)}) + \sum_{j \in A_p-\{k,\ell\}} \Delta_f(j \mid A_p^{(j)})\\
        &\leq \Delta_f(\ell \mid A_p^{(k)}) + \sum_{j \in A_p-\{k,\ell\}}  \Delta_f(j \mid A_p^{(j)})\\
        &\leq \Delta_f(k \mid A_p^{(k)}) + \sum_{j \in A_p-\{k,\ell\}}  \Delta_f(j \mid A_p^{(j)})\\
        &=f(A_p-\{\ell\}) <2
    \end{align*}
    The third inequality is because the items were allocated to $p$ and not $q$.
    The fourth inequality comes from $A_p^{(\ell)} \supseteq A_p^{(k)}$, then we use that at the partial allocation $A^{(k)}$, greedy prefered $k$ over $\ell$. Finally, the strict inequality holds because otherwise $p$ would have achieved the $\alpha\cdot\OPT$ threshold and not received any more items.
\end{proof}

\begin{lemma}\label{lem:no-more-than-it-gets}
    For any $p \in P$, $\sum_{j \in V_p} \Delta_f(j \mid A_q) \geq 3+\sum_{j \in W_p} \Delta_f(j \mid A_q) - f(A_p)$.
\end{lemma}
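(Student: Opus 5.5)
Write $A^*_p$ via the reallocation digraph. Every item of $A^*_p$ lies in some $A_r$, so $A^*_p = V_p$, the disjoint union of the sets $A_{r\rightarrow p}$. Likewise $A_p = W_p \cup A_{p\rightarrow p}$. Here I read $W_p$ as the items leaving $p$, i.e.\ $A_p - A_{p\rightarrow p}$; if self-loops are counted in $W_p$, the argument below only gets easier, since those items also sit in $A_p$.

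The plan is to compare $f(A^*_p)\ge 5$ with what $p$ already has in the greedy solution, measuring everything against the base set $A_q$. This mimics the chain in \cref{lem:1-3-approx}, but is done for a single fixed player $p$ instead of averaging.

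The key steps, in order, are as follows.
\begin{enumerate}
\item By monotonicity and submodularity (the same expansion as in \cref{lem:1-3-approx}),
\[
5 \le f(A^*_p) \le f(A^*_p\cup A_q) \le f(A_q) + \sum_{j\in A^*_p}\Delta_f(j\mid A_q) \le 2 + \sum_{j\in V_p}\Delta_f(j\mid A_q),
\]
using $f(A_q)\le 2$ and $A^*_p \subseteq V_p$.
\item This gives $\sum_{j\in V_p}\Delta_f(j\mid A_q)\ge 3$. So it suffices to show that $\sum_{j\in W_p}\Delta_f(j\mid A_q) \le f(A_p)$.
\item For that inequality I would reuse the argument of \cref{lem:help-max-sum-greedy}, i.e.\ inequality (\ref{eq:delta-smaller-p}). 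For each $j\in A_p$ with $p\neq q$,
\[
\Delta_f(j\mid A_q)\le \Delta_f(j\mid A^{(j)}_q)\le \Delta_f(j\mid A^{(j)}_p).
\]
Summing over $W_p\subseteq A_p$ and using nonnegativity of marginals gives
\[
\sum_{j\in W_p}\Delta_f(j\mid A_q)\le \sum_{j\in A_p}\Delta_f(j\mid A^{(j)}_p)=f(A_p).
\]
\item For $p=q$, each term $\Delta_f(j\mid A_q)$ with $j\in A_q$ is $0$, so the left-hand sum vanishes and the bound is trivial.
\end{enumerate}
Combining the second and third steps (or the fourth, when $p=q$) yields $\sum_{V_p} \ge 3 \ge 3 + \sum_{W_p} - f(A_p)$, as claimed.

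The step needing most care is the second inequality in the third step, $\Delta_f(j\mid A^{(j)}_q)\le\Delta_f(j\mid A^{(j)}_p)$. It requires that $q$ was still eligible at the moment $j$ was allocated. This holds because $f(A_q)\le 2$ throughout: we assumed $q$ never reaches the threshold. One subtlety is the initial phase, where the first item of $p$ is handed out without comparing marginals. There I would use that $a^1_p$ is among the $m$ largest singletons, so $\Delta_f(a^1_p\mid\emptyset)=f(a^1_p)$ dominates $\Delta_f(a^1_p\mid A^{(a^1_p)}_q)$ by submodularity, taking $A^{(j)}$ as the state at that time; this is the same convention under which \cref{lem:help-max-sum-greedy} was stated. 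Beyond this, the argument is bookkeeping over the partition of $A^*_p$ and $A_p$ into reallocation sets.
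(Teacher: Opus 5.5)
Your two ingredients are the paper's: $\sum_{j\in A^*_p}\Delta_f(j\mid A_q)\ge 3$ as in (\ref{eq:20}), and $\sum_{j\in A_p}\Delta_f(j\mid A_q)\le f(A_p)$ from (\ref{eq:delta-smaller-p}). The gap is in the bookkeeping, which uses an inconsistent reading of the two sets. You let $V_p$ contain the self-loop set $A_{p\rightarrow p}$, so that $V_p=A^*_p$, but you let $W_p$ exclude it. In the paper both exclude it. Its proof of this lemma uses $V_p=A^*_p-(A_p-W_p)$, and later it uses $W_b=\{y_b\}$ and $A^*_b=V_b\cup(A_b-W_b)$. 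Also, the identity $\sum_p g(p)=0$ in \cref{lem:small-geq-big} needs $\bigcup_p V_p=\bigcup_p W_p$. That identity fails under your reading, because every $A_{p\rightarrow p}$ is nonempty.

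With the intended definitions, your Step~1 claim $A^*_p\subseteq V_p$ is false, and the Step~2 conclusion $\sum_{j\in V_p}\Delta_f(j\mid A_q)\ge 3$ does not follow. Your bounds only give $3-\sum_{j\in A_{p\rightarrow p}}\Delta_f(j\mid A_q)$; compare the bound $3-\Delta_f(x_b\mid A_q)$ that the paper derives for big players. So under your mixed reading you prove a weaker inequality, with $A^*_p$ in place of $V_p$ on the left.

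The repair uses exactly the slack you discarded in Step~3 when you passed from $A_p$ to $W_p$. Since $A_p-W_p=A_{p\rightarrow p}\subseteq A^*_p$, you have
\[
\sum_{j\in V_p}\Delta_f(j\mid A_q)=\sum_{j\in A^*_p}\Delta_f(j\mid A_q)+\sum_{j\in W_p}\Delta_f(j\mid A_q)-\sum_{j\in A_p}\Delta_f(j\mid A_q).
\]
Bound the first sum below by $3$, which is your Step~1. Bound the last sum above by $f(A_p)$, which your Step~3 proves before restricting to $W_p$; for $p=q$ that sum is $0$. This is the paper's proof. The point is that the self-loop items must be charged against $f(A_p)$, not dropped.

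Your parenthetical reading, with $W_p=A_p$ together with $V_p=A^*_p$, is consistent, and under it your argument is correct. It is in fact equivalent to the intended statement, since the $A_{p\rightarrow p}$ terms cancel. You should make that equivalence explicit rather than rely on the mixed reading.
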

\begin{proof}
    First, notice that
    \begin{align}
        \sum_{j \in A_p^*} \Delta_f(j \mid A_q) \geq \Delta_f(A_p^* \mid A_q) \geq f(A_p^*)-f(A_q) \geq 5-2 = 3. \label{eq:20}
    \end{align}
    Since $V_p = A_p^* - (A_p - W_p)$, $A_p-W_p \subseteq A_p^*$, and $W_p \subseteq A_p$, we can write
    $$\sum_{j \in V_p} \Delta_f(j \mid A_q) = \sum_{j \in A_p^*} \Delta_f(j \mid A_q)+\sum_{j \in W_p} \Delta_f(j \mid A_q)-\sum_{j \in A_p} \Delta_f(j \mid A_q)$$
    Now, the result follows by using (\ref{eq:20}) and (\ref{eq:delta-smaller-p}).
    \end{proof}
\begin{lemma}\label{lem:small-geq-big}
    $|B| \leq |S|$.
\end{lemma}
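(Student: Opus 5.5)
The plan is to exploit the structure of the reallocation digraph around the big players. By \cref{obs:exactly-two}, each $b\in B$ has $A_b=\{x_b,y_b\}$ with $x_b\in A_{b\to b}$. By \cref{lem:bigger-subset}, $y_b\notin A^*_b$. By \cref{lem:no-edge}, $y_b$ is not in $A^*_c$ for any other $c\in B$. Hence $W_b=\{y_b\}$ and $y_b\in V_s$ for some $s\in S$. By the same lemma, the only arc entering $b$ from $B$ is its self-loop, so $V_b\subseteq\{x_b\}\cup\bigcup_{s\in S}A_{s\to b}$. Every big player must therefore be ``fed'' by small players. I would then show that this feeding is too expensive for small players to support more than one big player each on average, in the sense of a Hall-type counting argument.

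\textbf{Step 1 (demand of a big player).} Apply \cref{lem:no-more-than-it-gets} to $b\in B$ and subtract the contribution of $x_b$:
\[
\sum_{s\in S}\sum_{j\in A_{s\to b}}\Delta_f(j\mid A_q)\;\ge\;3+\Delta_f(y_b\mid A_q)-f(A_b)-\Delta_f(x_b\mid A_q).
\]
The goal is to lower-bound the right-hand side by a constant $D$. To do this I would bound $f(A_b)-\Delta_f(x_b\mid A_q)-\Delta_f(y_b\mid A_q)$ using:
\begin{itemize}
\item $f(A_b)\le f(a_b^1)+\Delta_f(a_b^2\mid\{a_b^1\})\le f(a_b^1)+\beta$;
\item $f(a_b^1)\ge\beta$ and $f(j)<2$ for every item (\cref{obs:small-items});
\item the greedy comparison (the second item of $b$ was chosen for $b$ and not for $q$), which ties $\Delta_f(\cdot\mid A_q)$ to the marginals inside $A_b$, in the spirit of (\ref{eq:delta-smaller-p}).
\end{itemize}

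\textbf{Step 2 (supply of small players).} By \cref{lem:give-at-most-two}, each $s\in S$ satisfies $\sum_{j\in W_s}\Delta_f(j\mid A_q)<2$. In particular, the total amount $s$ can send to big players is less than $2$.

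\textbf{Step 3 (counting).} For any $B'\subseteq B$, let $N(B')\subseteq S$ be the set of small players with an arc into $B'$. Summing Step 1 over $B'$ and Step 2 over $N(B')$ gives $D|B'|<2|N(B')|$. If $D\ge 2$, this yields $|N(B')|>|B'|$ for every nonempty $B'$. Hall's theorem then gives an injection $B\to S$, and hence $|B|\le|S|$. Alternatively, summing directly over all of $B$ and $S$ already suffices.

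I expect Step 1 to be the main obstacle: obtaining a per-big-player demand of at least $2$ in $\Delta_f(\cdot\mid A_q)$ units. The danger is that $\Delta_f(x_b\mid A_q)$ may be close to $2$ while $\Delta_f(y_b\mid A_q)$ is small. If the clean bound $D\ge2$ fails, I would instead combine the two inequalities globally. Summing \cref{lem:no-more-than-it-gets} over all players telescopes the $V/W$ terms, since every non-loop item is counted once in some $V$ and once in some $W$. This could then be balanced against $f(A_b)\le 2+\beta$ for $b\in B$ and $f(A_s)\le 4-\beta$ for $s\in S$.
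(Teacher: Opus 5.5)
There is a genuine gap. Step~1 cannot deliver $D\ge 2$, and the fallback is circular. Since $A_b^*=V_b\cup\{x_b\}$, the most the available facts give is
\[
\sum_{j\in V_b}\Delta_f(j\mid A_q)\;\ge\;\Delta_f(A_b^*\mid \{x_b\}\cup A_q)\;\ge\;5-f(A_q)-\Delta_f(x_b\mid A_q)\;\ge\;3-\Delta_f(x_b\mid A_q).
\]
This is only $>1$. Nothing among your listed tools forces $\Delta_f(x_b\mid A_q)\le 1$; for instance, $x_b=a_b^1$ could have $f(x_b)$ close to $2$ and be nearly independent of $A_q$.

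Your own Step~1 expression is weaker still. With $W_b=\{y_b\}$ one has $V_b=A_b^*-(A_b-W_b)=A_b^*-\{x_b\}$, so $x_b$ is already excluded and subtracting $\Delta_f(x_b\mid A_q)$ is a needless loss. By (\ref{eq:delta-smaller-p}), your right-hand side is then at most $3-2\Delta_f(x_b\mid A_q)$. With demand just above $1$ against supply below $2$, Step~3 only yields $|B'|<2|N(B')|$, not Hall's condition.

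The global fallback goes the wrong way. Summing \cref{lem:no-more-than-it-gets} over all players gives $\sum_p f(A_p)\ge 3m$. Inserting $f(A_b)\le 2+\beta$ and $f(A_s)\le 4-\beta$ then gives $(\beta-1)(|B|-|S|)\ge 0$, i.e.\ $|S|\le|B|$. This is just ${\sf avg}\ge 3$ again, which is exactly what \cref{lem:small-geq-big} is later played against, so it cannot produce the lemma.

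The missing idea is to match each $b\in B$ not to the small players feeding it, but to the unique small player receiving $y_b$. You then prove this map injective by a \emph{global} conservation argument:
\begin{itemize}
\item Let $g(p):=\sum_{j\in W_p}\Delta_f(j\mid A_q)-\sum_{j\in V_p}\Delta_f(j\mid A_q)$, so that $\sum_p g(p)=0$. Group each $s\in S$ with the big players whose $y$-item goes to $s$; these groups partition $P$.
\item For $s\in S$, \cref{lem:no-more-than-it-gets} and $f(A_s)\le 4-\beta$ give $g(s)\le 1-\beta$.
\item For $b\in B$, the inflow bound above together with $\Delta_f(x_b\mid A_q)+\Delta_f(y_b\mid A_q)\le 2+\beta$ gives $g(b)\le\beta-1$. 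So groups containing at most one big player have nonpositive sum.
\item Suppose $s$ receives $y_{b_1},\dots,y_{b_k}$ with $k\ge 2$. Then $y_{b_i}\in V_s$ together with your Step~2 gives $g(s)<2-\sum_i\Delta_f(y_{b_i}\mid A_q)$. Inflow $>1$ gives $g(b_i)<\Delta_f(y_{b_i}\mid A_q)-1$. Hence the group sum is $<2-k\le 0$.
\end{itemize}
Since the total is $0$ and every group is nonpositive, no such group exists. Therefore $b\mapsto$ (recipient of $y_b$) injects $B$ into $S$.

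So your Step~2 is indeed needed, but it must be paired with the $y$-items flowing \emph{into} $s$. The contradiction also relies on the nonpositive balance of all other groups, which a local supply/demand count does not see.
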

\begin{proof}
    Let $S_0$ be the set of small players that do not receive any item from big players, let $S_1$ be the set of small players that receive one item from one big player and let $S_{\geq 2}$ be the set of small players that receive items from multiple big players.
    If we can show $S_{\geq 2} = \emptyset$, this means that no two big players have an edge towards the same small player. But since every big player has an edge towards a small player by \cref{lem:no-edge}, by pigeonhole principle this would imply the desired $|B| \leq |S|$.
    
     Associate with each player $p$ the quantity $$g(p):=\sum_{j \in W_p} \Delta_f(j \mid A_q) - \sum_{j \in V_p} \Delta_f(j \mid A_q)$$
    Note that $\sum_{p \in P} g(p)=0$, as $\bigcup_{p \in P} W_p = \bigcup_{p \in P} V_p$.
    We can decompose the set of players as $$P=S_0 \cup \bigcup_{p \in S_1} \{p,b_1(p)\} \cup \bigcup_{p \in S_{\geq 2}} \{p,b_1(p),b_2(p),...,b_k(p)\}$$
    where $b_i(p)$ denotes the $i$-th big player that gives an item to $p$, in an arbitrary order. Therefore,
    \begin{align}
        \sum_{p \in P}g(p)=\sum_{p \in S_0} g(p) + \sum_{s \in S_1} g(s)+g(b_1(s)) + \sum_{s \in S_{\geq 2}}g(s)+g(b_1(s))+...+g(b_k(s))\label{eq:part}
    \end{align}
    Since $\sum_{p \in P} g(p)=0$, to prove $S_{\geq 2} = \emptyset$, it is enough to show that all terms of (\ref{eq:part}) are $\leq 0$, and that terms of the last sum are $<0$. This is the purpose of the rest of the proof.

    \begin{enumerate}
        \item $g(p) \leq 0$ for $p \in S_0$. This follows directly from \cref{lem:no-more-than-it-gets} and $f(A_p) \leq 4-\beta \leq 3$.
        \item $g(s) + g(b) \leq 0$ for $s \in S_1$ and $b = b_1(s)$.
        Since $s$ is a small player, we have $f(A_s) \leq 4-\beta$, and by \cref{lem:no-more-than-it-gets}, we get $g(s) \leq 1-\beta$.
    If we can show $g(b) \leq \beta-1$, then this implies $g(s)+g(b) \leq 0$.
    Recall that $W_b=\{y_b\}$. 

    As the second item allocated to $b$ contributes at most $\beta$, we have $\Delta_f(x_b \mid A_q) + \Delta_f(y_b \mid A_q) \leq 2 + \beta$. By the facts $f(A_b^*)-f(A_q) \geq 3$ and $A_b^*= V_b\cup(A_b-W_b)$, we get
    \begin{align*}
        \sum_{j \in V_b} \Delta_f(j \mid A_q)
        &\geq \Delta_f(V_b\cup(A_b-W_b) \mid (A_b-W_b) \cup A_q)\\
        &=\Delta_f(A_b^* \mid (A_b-W_b) \cup A_q)\\
        &\geq  f(A_b^*)-\Delta_f(A_b-W_b \mid A_q) - f(A_q)\\
        &\geq 3-\Delta_f(x_b \mid A_q)
    \end{align*}
In total, $g(b) \leq \Delta_f(y_b \mid A_q) - \sum_{j \in V_b} \Delta_f(j \mid A_q) \leq \Delta_f(y_b \mid A_q) - 3+\Delta_f(x_b \mid A_q) \leq \beta-1.$

        \item $g(s) + g(b) + g(c)<0$ for $s \in S_{\geq 2}$, $b=b_1(s)$ and $c=b_2(s)$.
         We have $f((A_b - W_b) \cup A_q) \leq f(A_b-W_b)+f(A_q) \leq 4$, but since $f(A_b^*)\geq 5$, we must have
          $$\sum_{j \in V_b} \Delta_f(j \mid A_q) \geq \Delta_f(V_b \cup (A_b-W_b) \mid (A_b-W_b) \cup A_q) \geq 1.$$
    The same holds for player $c$. By \cref{lem:give-at-most-two}, we also know $g(s)<2-\Delta_f(y_b \mid A_q)-\Delta_f(y_c \mid A_q)$.
    Therefore $$g(b)+g(c)+g(s) \leq \Delta_f(y_b \mid A_q)-1 + \Delta_f(y_c \mid A_q)-1+g(s)<0.$$
    \end{enumerate}
\end{proof}
We can conclude as follows:
\begin{align*}
{\sf avg}&=\frac{1}{m} \sum_{p \in P} f(A_p) \\
&\leq \frac{1}{m} \Bigl[|B| \cdot (2+\beta) + |S| \cdot (4-\beta) \Bigr]\\
&\leq \frac{1}{m} \Bigl[\frac{|B|+|S|}{2} \cdot (2+\beta)+ \frac{|B|+|S|}{2}\cdot (4-\beta)\Bigr]\\
&=\frac{1}{m}\Bigl[\frac{6 \cdot (|B|+|S|)}{2}\Bigr]\\
&=3
\end{align*}
where the last inequality is because $2+\beta \geq 4-\beta$, and $|B| \leq |S|$ by \cref{lem:small-geq-big}.
This contradicts ${\sf avg}>3$, as stated in the beginning. This finishes the proof of \cref{th:2-5}.

 We discuss the (close-to-) tightness of this analysis in \ref{app:A-tight}.
 In particular, we show that no value of $\alpha$ can lead to an approximation factor better than $\frac{2+S}{4+3S}\approx0.4068$, where $S = \sum_{i = 1}^\infty \frac{1}{s_i-1}$ with $s_k = 1+\prod_{i = 1}^n s_i$ is a constant related to the \emph{Sylvester sequence} or \emph{Salzer sequence} and was already used in \cite{liang1980lower,seiden2005two} for other packing problems. 

\section{Integrality gaps for the configuration LP}\label{sec:configuration-lp}
\begin{figure}[ht!]\small
\centering
\begin{minipage}{0.45\textwidth}
\captionsetup{type=figure}
\caption*{\textbf{Assignment LP (A)}}
\begin{align*}
    \sum_{i=1}^m x_{ij}&=1 & \forall j=1,...,n\\
    \sum_{j=1}^n f'_i(j)x_{ij} &\geq T & \forall i=1,...,m\\
    x_{ij} &\geq 0 & \forall i,j
\end{align*}
\end{minipage}
\hspace{1em}
\vrule width 1pt
\hspace{1em}\begin{minipage}{0.45\textwidth}
\captionsetup{type=figure}
\caption*{\textbf{Configuration LP (P)}}
\begin{align*}
    \sum_{C \in \mathcal{C}(T,p)} x_{p,C} &\geq 1 & \forall p \in P\\
    \sum_{C \ni j}\sum_{p \in P} x_{p,C} &\leq 1 & \forall j \in J\\
    x_{p,C} &\geq 0 & \forall p \in P, C \in \mathcal{C}(T,p)
\end{align*}
\end{minipage}
\caption{The assignment LP and the configuration LP.}
\label{LP:primals}
\end{figure}
\begin{figure}[ht!]\small
\centering
\begin{minipage}{0.33\textwidth}
\captionsetup{type=figure}
\caption*{\textbf{ Dual LP (D0)}}
\begin{align*}
    \max \sum_{p \in P} z_p &- \sum_{j \in J} y_j &  \text{s.t.}\\
    \sum_{j \in C} y_j &\geq z_p & \forall p,C\\
    y_j,z_p &\geq 0 & \forall j,p
\end{align*}
\end{minipage}
\vrule width 1pt
\begin{minipage}{0.33\textwidth}
\captionsetup{type=figure}
\caption*{\textbf{ Dual LP (D1)}}
\begin{align*}
    \sum_{p \in P} z_p &> \sum_{j \in J} y_j & \\
    \sum_{j \in C} y_j &\geq z_p & \forall p,C\\
    y_j,z_p &\geq 0 & \forall j,p
\end{align*}

\end{minipage}
\vrule width 1pt
\begin{minipage}{0.31\textwidth}
\captionsetup{type=figure}
\caption*{\textbf{ Dual LP (D2)}}

\begin{align*}
    \sum_{j \in J} y_j &<m &\\
    \sum_{j \in C} y_j &\geq 1 & \forall C\\
    y_j &\geq 0 & \forall j
\end{align*}
\end{minipage}
\caption{The duals to the configuration LP.}
\label{LP:duals}
\end{figure}
In this section, we study the integrality gap of the \emph{configuration LP} in the case of identical valuation functions.
Before defining this LP, let us consider the simpler \emph{assignment LP}:
It searches for the largest $T$ such that the program depicted in \cref{LP:primals} on the left, where $f'_i(j)=\min(f_i(j),T)$, has a feasible solution. Let us call this LP (A).
It is a folklore result that (A) has an integrality gap of $2$ when all players have identical additive valuations.

In the restricted assignment case, (A) has a bad integrality gap \cite{bansal2006santa}. This was the initial motivation to study a stronger LP, namely the aforementioned {configuration LP}.
Let $\mathcal{C}(T,p)$ contain the configurations $C$ such that $f_p(C) \geq T$.
We have a variable for each pair of player and configuration. The LP is shown in \cref{LP:primals} on the right.
The largest $T$ such that this LP has a feasible solution is an upper bound on the optimal solution. We call this system (P).

It is well-known that the integrality gap of (P) in the general case is $\Omega\bigl(\sqrt{m}\bigr)$ \cite{bansal2006santa}, while in the restricted assignment case it was shown to be smaller than $3.534$~\cite{haxell2025improved}.
When valuation functions are submodular, the gap in the restricted assignment case is $\mathcal{O}(\log \log n)$~\cite{bamas2025submodular}.

(P) has exponentially many variables in general. Assuming that the primal minimizes the $0$-vector, we can formulate its dual (D0), depicted in \cref{LP:duals} on the left.
The separation problem for this LP is the classical \textsc{Knapsack} problem. Since \textsc{Knapsack} admits a (F)PTAS~\cite{sahni1975approximate}, we can solve the LP up to any desired accuracy using the Ellipsoid method.
\subsection{Constructing Dual Solutions}
Notice that in (D0), a trivial solution always exists by setting all variables to $0$. But if we had a solution with objective value larger than $0$, then we could obtain an arbitrarily high objective value by scaling all variables.
By LP duality, we get that if (D0) is unbounded, then (P) with the same threshold $T$ is infeasible.
Additionally, we can directly express the requirement that the objective function must be greater than $0$ as a constraint. We call the resulting LP (D1), and it is shown in the middle of \cref{LP:duals}.

In the case of identical valuations, we can, without loss of generality, set all player variables to $1$, and obtain (D2), as seen on the right of \cref{LP:duals}.
This LP searches for a fractional hitting set of size smaller than $m$ that hits all configurations larger than our threshold value.
In (D2), we call the first constraint \emph{value constraint} and the second set of constraints \emph{configuration constraints}.
If (D2) has a feasible solution, then (D0) is unbounded, and by weak duality, (P) is infeasible. We obtain the following:
\begin{proposition}\label{prop:framework}
    Consider a class $\Gamma$ of max-min allocation instances, and let $\OPT(\mathcal{I})$ be the optimal integral value of instance $\mathcal{I}$.
    If for all $\mathcal{I} \in \Gamma$, (D2) is feasible for $\mathcal{I}$ and threshold value $\alpha \cdot \OPT(\mathcal{I})$, then the integrality gap of the configuration LP on $\Gamma$ is at most $\alpha$.
\end{proposition}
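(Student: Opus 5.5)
This is a weak-duality argument, so I will argue directly from the definitions. The integrality gap of (P) on an instance $\mathcal{I}$ compares $T^*(\mathcal{I})$, the largest $T$ for which (P) is feasible, with $\OPT(\mathcal{I})$. Here I read the factor $\alpha$ in the statement as the multiplicative factor by which the LP value can exceed $\OPT$, so $\alpha \geq 1$, as in \cref{th:3}. The goal is therefore to show $T^*(\mathcal{I}) \leq \alpha \cdot \OPT(\mathcal{I})$ for every $\mathcal{I} \in \Gamma$. I would prove the contrapositive: for any $T \geq \alpha\cdot\OPT(\mathcal{I})$, feasibility of (D2) at threshold $T$ forces (P) at threshold $T$ to be infeasible.

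\textbf{Step 1: a (D2) certificate at $\alpha\cdot\OPT$ also works at any larger threshold $T$.} The hypothesis gives a feasible $y$ only at threshold $\alpha\cdot\OPT(\mathcal{I})$. Since $f$ is identical for all players, $\mathcal{C}(T,p) = \{C : f(C) \geq T\}$, and this family shrinks as $T$ grows. So every configuration constraint of (D2) at threshold $T$ is already a configuration constraint at threshold $\alpha\cdot\OPT(\mathcal{I})$. The value constraint does not depend on $T$. Hence the same $y$ is feasible for (D2) at every $T \geq \alpha\cdot\OPT(\mathcal{I})$.

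\textbf{Step 2: feasibility of (D2) rules out feasibility of (P).} This is the main step, and I would verify it by a direct weak-duality computation rather than by appealing to unboundedness of (D0).
\begin{itemize}
\item Suppose $x$ is feasible for (P) at threshold $T$, and $y$ is feasible for (D2).
\item Multiply each player constraint of (P) by $1$ and each configuration constraint of (D2) by $x_{p,C}$, then sum:
\[
m \leq \sum_{p}\sum_{C} x_{p,C} \leq \sum_{p}\sum_{C} x_{p,C}\sum_{j\in C} y_j = \sum_{j} y_j \sum_{C\ni j}\sum_p x_{p,C} \leq \sum_j y_j < m .
\]
\item The second inequality uses $x \geq 0$ and the configuration constraints. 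The third uses $y \geq 0$ and the item constraints of (P). The last is the value constraint of (D2).
\end{itemize}
This is a contradiction, so (P) is infeasible at every $T \geq \alpha\cdot\OPT(\mathcal{I})$.

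\textbf{Step 3: conclude.} By Step 2, $T^*(\mathcal{I}) \leq \alpha\cdot\OPT(\mathcal{I})$, so the ratio of LP value to integral optimum is at most $\alpha$. This holds for every $\mathcal{I} \in \Gamma$, which bounds the integrality gap on $\Gamma$ by $\alpha$.

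One technicality: $T^*$ is a supremum. If (P) is feasible for thresholds arbitrarily close to but below some value, the sup is still at most $\alpha\cdot\OPT$, since every feasible $T$ satisfies $T < \alpha\cdot\OPT$ or the argument above applies. I expect no real obstacle here. The only care needed is the monotonicity in Step 1, so that the hypothesis, stated at exactly $\alpha\cdot\OPT$, covers all larger thresholds.
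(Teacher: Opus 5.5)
Your proof is correct and follows the paper's route. The paper argues that a feasible (D2) solution, read with $z_p=1$, makes (D0) unbounded by scaling, so (P) is infeasible by weak duality; your inequality chain is exactly that weak-duality argument written out directly for (D2), and your Step 1 (the configuration family shrinks as $T$ grows, so infeasibility at $\alpha\cdot\OPT(\mathcal{I})$ carries over to all larger thresholds) makes explicit a point the paper leaves implicit.
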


We give one more useful observation:
\begin{observation}\label{obs:lower}
    If there are variables $y_j$ such that the configuration constraints of (D2) are satisfied, $\sum_{j \in J} y_j \leq m$ and there exists $j^* \in J$ such that for all $C \ni j^*$, we have $\sum_{j \in C} y_j >1$, then (D2) is feasible.
\end{observation}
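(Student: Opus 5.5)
The plan is a perturbation argument. We start from the given variables $y$ and redistribute a little mass so that every configuration constraint holds with slack and the total drops strictly below $m$. The hypothesis gives strict slack only on configurations containing $j^*$, so we exploit it by scaling.

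Let $\delta := \min_{C \ni j^*,\, C \in \mathcal{C}} \bigl(\sum_{j \in C} y_j\bigr) - 1$. This is a minimum over finitely many configurations, so $\delta>0$. If $y_{j^*}=0$, every configuration $C \ni j^*$ with $f(C)\ge T$ would make $C-\{j^*\}$ have the same $y$-sum. This is not immediately contradictory, since $C - \{j^*\}$ need not be a configuration, so we keep $y_{j^*}$ as is. The key step is to first decrease $y_{j^*}$ to $y'_{j^*} := \max(0, y_{j^*}-\delta)$. Every configuration containing $j^*$ then still has sum at least $1$, because it lost at most $\delta$. Configurations not containing $j^*$ are unaffected, so the configuration constraints hold.

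For the value constraint there are two cases. If $y_{j^*}>0$, the total decreased by $\min(y_{j^*},\delta)>0$, so $\sum_j y'_j < m$ and (D2) is feasible. If $y_{j^*}=0$, decreasing does nothing. Then note that the configuration constraints involving $j^*$ are strict while $y_{j^*}$ contributes nothing. In this case I would instead argue via scaling: set $y'' := (1-\varepsilon) y$ for a small $\varepsilon>0$. Configurations containing $j^*$ remain $\ge 1$ for $\varepsilon$ small. For the others, however, scaling may break the constraints, so scaling alone is insufficient.

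The main obstacle is exactly this degenerate case $y_{j^*}=0$. The fix I would try is to use the hypothesis as intended by the paper: the observation is presumably applied where $j^*$ carries positive weight. Alternatively, note that if $y_{j^*}=0$, the sums over configurations containing $j^*$ are the sums over their other items, which can also be lowered. Concretely, choose an item $j'$ with $y_{j'}>0$ appearing only in configurations that contain $j^*$. If no such item exists, then each positive item lies in some configuration avoiding $j^*$. In that situation the hypothesis gives no leverage, and I would verify whether the statement needs $y_{j^*}>0$ or rely on the generic case. I expect the intended proof to be the two-line reduction of $y_{j^*}$ by $\delta$, with $y_{j^*}>0$ implicit in the applications.
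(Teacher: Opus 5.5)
Your case $y_{j^*}>0$ is correct and is exactly the paper's one-line argument: lower $y_{j^*}$ by at most the minimal slack $\delta>0$ of the constraints through $j^*$. The gap is the case $y_{j^*}=0$ with $\sum_j y_j=m$. You leave it open and suggest the statement may need $y_{j^*}>0$ as an extra hypothesis. The paper's proof also silently assumes $y_{j^*}>0$, but the statement is true as written.

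The missing idea is monotonicity: since $f$ is monotone, the configuration family $\{C : f(C)\ge T\}$ is upward-closed. You looked at $C-\{j^*\}$, which indeed need not be a configuration. The useful direction is $C\cup\{j^*\}$, which always is one. Hence for every configuration $C\not\ni j^*$,
\[
y(C) \;=\; y(C\cup\{j^*\}) - y_{j^*} \;\geq\; 1+\delta-y_{j^*}.
\]
So when $y_{j^*}=0$, \emph{every} configuration constraint has slack at least $\delta$, and your scaling idea works after all. The vector $y/(1+\delta)$ satisfies all configuration constraints and has total at most $m/(1+\delta)<m$. If no configuration contains $j^*$, upward-closure says there are no configurations at all, and $y=0$ works. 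Your worry that scaling might break the constraints avoiding $j^*$ is exactly what upward-closure rules out.

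This really is where the structure of the configuration family is needed, beyond just lowering $y_{j^*}$. For a family that is not upward-closed, such as independent sets of value at least $T$ in the matroid-constrained setting, the statement as written fails. For instance, take $m=1$, items $a,j^*$, the rank-one uniform matroid, and additive $f$ with $f(a)=T=1$ and $f(j^*)=0$. Then $\{a\}$ is the only configuration. The vector $y_a=1$, $y_{j^*}=0$ meets all hypotheses (vacuously for $j^*$), yet every feasible $y$ has $\sum_j y_j\geq 1=m$.
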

\begin{proof}
    Since $y_{j^*}$ is not involved in any tight constraint, we can lower its value by a sufficiently small amount, satisfying the value constraint while keeping all configuration constraints satisfied.
\end{proof}

\subsection{Submodular valuations}

The system (P) can also be formulated for submodular valuations. While in the additive case, solving the separation problem only incurred an $\varepsilon$ error, this is different in the submodular case. This is because we need to solve a submodular version of knapsack for the separation oracle.
Due to the hardness of maximum coverage~\cite{feige1998threshold}, which includes submodular knapsack, we lose a factor of $1-\frac{1}{e}$ while solving the LP. On the algorithmic side, this hardness is matched by a $1-\frac{1}{e}$-approximation by Sviridenko~\cite{sviridenko2004note}.
Therefore, any rounding algorithm that loses a factor of $\alpha$ translates into an approximation algorithm with factor $(1-\frac{1}{e})\alpha$.

In the following, we show that apart from our combinatorial approach presented in the last section, an LP rounding approach could also lead to a constant-factor approximation..
\theoremRestateThree*
We first show that we can restrict our attention to items $j$ with $f(j) \leq \OPT$. Let $\OPT(\mathcal{I})$ denote the optimal integral value for an instance $\mathcal{I}$, and let $\OPT_{\sf LP}(\mathcal{I})$ denote the optimal value of (P) on $\mathcal{I}$.
\begin{lemma}\label{lem:no-big-integral}
    Let $\mathcal{I}$ be an instance with an item $j$ such that $f(j) > \OPT(\mathcal{I})$.
    Let $\mathcal{I}'$ be the instance obtained from $\mathcal{I}$ by removing $j$ and one player. Then, $\OPT(\mathcal{I}) \geq \OPT(\mathcal{I}')$ and $\OPT_{\sf LP}(\mathcal{I}) \leq \OPT_{\sf LP}(\mathcal{I}')$.
\end{lemma}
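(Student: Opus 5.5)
The two inequalities are separate: the integral one goes by mapping allocations of $\mathcal{I}'$ into $\mathcal{I}$, and the LP one by mapping fractional solutions of (P) for $\mathcal{I}$ into $\mathcal{I}'$ at the same threshold.

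\emph{Integral part.} Take an optimal allocation $A'$ of $\mathcal{I}'$, which has $m-1$ players. Give the removed player the single item $j$ and obtain an allocation of $\mathcal{I}$. Every old player keeps its value, and the new player gets $f(j) > \OPT(\mathcal{I}) \geq 0$. Hence $\OPT(\mathcal{I}) \geq \min(\OPT(\mathcal{I}'), f(j))$.

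\begin{itemize}
\item If $\OPT(\mathcal{I}') \leq f(j)$, the minimum is $\OPT(\mathcal{I}')$ and we are done.
\item If $\OPT(\mathcal{I}') > f(j)$, we would get $\OPT(\mathcal{I}) \geq f(j) > \OPT(\mathcal{I})$, a contradiction.
\end{itemize}

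So in either case $\OPT(\mathcal{I}) \geq \OPT(\mathcal{I}')$. If the removed player is empty in $\mathcal{I}'$ (i.e.\ $m=1$), the statement is degenerate: $\OPT(\mathcal{I}')$ is then $+\infty$ or the convention must be set, so I would assume $m \geq 2$.

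\emph{LP part.} Let $T$ be feasible for (P) on $\mathcal{I}$ with solution $x$. We build a feasible solution for $\mathcal{I}'$ at the same $T$. The configurations containing $j$ carry total mass $\mu := \sum_p \sum_{C \ni j} x_{p,C} \leq 1$, and the mass splits across players.

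\begin{enumerate}
\item For every configuration $C \ni j$ used by player $p$, replace it by $C - \{j\}$ with the same weight. These sets may have value below $T$, so this deficit must be repaired.
\item Remove one player $r$ and redistribute its configurations. Each player $p$ is short exactly $d_p := \sum_{C \ni j} x_{p,C}$ of valid mass, where $\sum_p d_p = \mu \leq 1$.
\item Player $r$ holds valid configurations of total mass at least $1$. Among these, the ones not containing $j$ have mass at least $1 - d_r$.
\item Hand that mass to the players $p \neq r$ in amounts $d_p$, taking pieces of $r$'s configurations. This is possible since $\sum_{p \neq r} d_p = \mu - d_r \leq 1 - d_r$.
\item Delete the configurations containing $j$ for players $p \neq r$, and delete all of $r$'s configurations that contain $j$.
\end{enumerate}

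Each player $p \neq r$ now has valid mass at least $(1 - d_p) + d_p = 1$. Item loads do not increase: we only reassign existing weight from $r$ to others and delete weight, and $j$ no longer appears anywhere. Since configuration sets are the same for all players (identical valuations), reassigning a configuration from $r$ to $p$ is valid. Hence $T$ is feasible for $\mathcal{I}'$, giving $\OPT_{\sf LP}(\mathcal{I}) \leq \OPT_{\sf LP}(\mathcal{I}')$.

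The main subtlety is the choice of $r$ and the bookkeeping in step 4. Any $r$ works, since $r$ has valid mass at least $1$ and at most $d_r$ of it contains $j$. The only care needed is the splitting of fractional configurations. This uses no property of $f(j)$, which is needed only for the integral inequality.
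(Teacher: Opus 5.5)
Your proof is correct and follows the same route as the paper: the integral inequality comes from giving $j$ alone to a re-added player on top of an optimal allocation of $\mathcal{I}'$, and the LP inequality comes from using the removed player's configuration mass (at least $1$) to replace all configurations containing $j$ (total mass at most $1$); your accounting for the removed player's own $j$-containing mass $d_r$ is more careful than the paper's. Your caveat about $m=1$ is unnecessary, since the hypothesis already rules it out: with a single player, $\OPT(\mathcal{I}) = f(J) \geq f(j)$.
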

\begin{proof}
    Fix an arbitrary player $p \in P$. For an optimal solution $x$ to (P), we have
    $$\sum_{C \ni j} \sum_{p \in P} x_{p,C} \leq 1 \leq \sum_{C \in \mathcal{C}(\OPT_{\sf LP}(\mathcal{I}),p)} x_{p,C}$$
    Therefore, we can replace all configurations in which $j$ appears by the conifigurations of player $p$, without violating any constraint except the constraint of player $p$. Since item $j$ is not used anymore, this is a valid solution to (P) on instance $\mathcal{I}'$ of value $\OPT_{\sf LP}(\mathcal{I})$.

    Now let $A$ be an optimal integral allocation of $\mathcal{I}'$ and assume for contradiction $f_{\sf min}(A)>\OPT(\mathcal{I})$. Adding a new player $r$ with $A_r = \{j\}$ gives an integral allocation for $ \mathcal{I}$ of value $\min(f_{\sf \min}(A), f(j)) > \OPT(\mathcal{I})$, a contradiction.
\end{proof}
Therefore, if our instance contains items that contribute more than $\OPT$, then we can instead consider a smaller instance with larger integrality gap.
We now prove \cref{th:3}.
\begin{proof}
    Consider an arbitrary instance $\mathcal{I}$ of our problem and scale the valuation function so that $\OPT(\mathcal{I})=1$.     
    We can assume that for all $j \in J$, $f(j)\leq 1$ by \cref{lem:no-big-integral}.
    Let $\overline{f}(S):=\min (2,f(S))$, and consider an allocation $A$ that maximizes $\sum_{p \in P} \overline{f}(A_p)$.

    For $j \in A_p$ and $f$, define $\Delta_f(-j \mid A_p):=\Delta_f(A_p \mid A_p-\{j\})$.
    For any $p \in P$, let $T_f(p):=\sum_{j \in A_p} \Delta_f(-j \mid A_p)$.
    For each $p \in P$ and $j \in A_p$, we set the variables in (D2) to
    $$y_j = \frac{\Delta_{\overline{f}}(-j \mid A_p)}{T_{\overline{f}}(p)}$$
   if $T_{\overline{f}}(p) \neq 0$, and $y_j=0$ otherwise. For notational convenience, define $y(S) = \sum_{j \in S} y_j$.
    We prove that $y$ is a (strictly) feasible solution to (D2) when $T=3$. Suppose for contradiction that there is a set $C$ of items with $f(C)>3$ but $y(C) \leq 1$.
    
    Let $q$ be the min player in $A$.
    Clearly, $f(A_q) \leq 1$, and thus $\Delta_f(C \mid A_q) > 2$.
    Let $j^*= \arg\max_{j \in C} \frac{\Delta_f(j \mid A_q)}{y_j}$. 
    If $\frac{\Delta_f(j^* \mid A_q)}{y_{j^*}} \leq 2$, then we would have
    $$\frac{\Delta_f(C \mid A_q)}{y(C)} \leq \frac{\sum_{j \in C} \Delta_f(j \mid A_q)}{y(C)} \leq \frac{\sum_{j \in C} 2 \cdot y_{j}}{y(C)} = 2$$
    which contradicts our assumption $y(C) \leq 1$. Therefore, we can assume $\frac{\Delta_f(j^* \mid A_q)}{y_{j^*}} > 2$.
    
    We want to argue that reallocating item $j^*$ to player $q$ increases our potential according to $\overline{f}$.
    Notice that the gain of adding $j^*$ to player $q$ is $\Delta_{\overline{f}}(j^* \mid A_q)$, while the loss of removing $j^*$ from player $p$ is $\Delta_{\overline{f}}(-j^* \mid A_p) = y_{j^*} \cdot T_{\overline{f}}(p)$.
    We now bound $T_{\overline{f}}(p)$. First, we have
     \begin{align*}
        T_{\overline{f}}(p)&=\sum_{j \in A_p} \Delta_{\overline{f}}(-j \mid A_p)\\
        &= \sum_{j \in A_p} \min\Bigl(0,\Delta_f(-j \mid A_p) - \max\Bigl(0,f(A_p)-2\Bigr)\Bigr)\\
        &= \sum_{j \in A_p : \Delta_f(-j\mid A_p) \geq f(A_p)-2} \Delta_f(-j \mid A_p) - \max\Bigl(0,f(A_p)-2\Bigr)\\
     \end{align*}
     If this sum is empty, then $T_{\overline{f}}(p)=0$. Otherwise, we can assume that $\max\bigl(0,f(A_p)-2\bigr)$ is subtracted at least once. Thus,
    \begin{align*}
        T_{\overline{f}}(p) & \leq \Bigl(\sum_{j \in A_p} \Delta_f(-j \mid A_p)\Bigr) - \max\Bigl(0,f(A_p)-2\Bigr)\\
        &\leq f(A_p) - \max\Bigl(0,f(A_p)-2\Bigr)\\
        &\leq 2
    \end{align*}
    where the second inequality is by submodularity of $f$.
    Therefore, $$\Delta_{\overline{f}}(j^* \mid A_q) - \Delta_{\overline{f}}(-j^* \mid A_p)>\Delta_{\overline{f}}(j^* \mid A_q) - 2 \cdot y_{j^*} = \Delta_{f}(j^* \mid A_q) - 2 \cdot y_{j^*}>0,$$
    where the equality is because of $f(A_q \cup \{j^*\}) = \overline{f}(A_q \cup \{j\})$, which follows from our assumption $f(j^*) \leq 1$ and $f(A_q) \leq 1$.
    So, reallocating $j^*$ from $p$ to $q$ increases the potential $\overline{f}$, contradicting the optimality of $A$.

    We conclude that the configuration constraints in (D2) are satisfied and not tight for $T>3$. We can finish the proof with \cref{obs:lower}, since
    $$\sum_{j \in J} y_j = \sum_{p \in P} \sum_{j \in A_p} y_j \leq m.$$
\end{proof}
A simple lower bound on the integrality gap is proven in \ref{app:integrality-gap}.
\propLowerBoundSubmodularIntegralityGap*
\subsection{Submodular valuations with matroid constraints}\label{sec:4-3}
Now consider the following problem: Apart from items $J$, players $P$ and a submodular monotone valuation function $f$, we are given $k$ matroids $\mathcal{M}_1=(J,\mathcal{I}_1),\dots,\mathcal{M}_k=(J,\mathcal{I}_k)$ and our task is to find a (partial) allocation $A$ such that $A_p \in \mathcal{I}_i$ for all $p \in P$ and all $i \in [k]$.
The configuration LP can be formulated in the same way as before, and the separation oracle incurs a loss of $2.7k$ by using the algorithm of \cite{chekuri2011submodular}.
In the following, we prove \cref{th:8}, which states that the integrality gap is still bounded by $5$ in this case and implies a $13.5k$-estimation algorithm for max-min allocation subject to $k$ matroid constraints.
\theoremRestateFive*
\begin{proof}
    As before, let $\OPT=1$, $\overline{f}(S):=\min (2,f(S))$ and define $A$ and the $y_j$-variables in the same way.
    The proof of \cref{lem:no-big-integral} directly translates to the matroid constraint setting, so we may again assume $f(j) \leq 1$ for all $j \in J$.
    Unlike before, we set $y_j=0$ for all $j \in A_q$, where $q$ is a fixed min player in $A$. We also set $y_j=0$ for all unallocated $j \notin A$.

    Now suppose for contradiction that there is a set $D \in \bigcap_{i = 1}^k \mathcal{I}_k$ of items with $f(D)>5$ but $y(D) < 1$.
    We can partition $D$ into $D_1 \uplus D_2$ such that $f(D_i) \geq 2$ for both $i \in [2]$, by greedily adding items to $D_1$ until $f(D_1)>2$. By the assumption that $f(j) \leq 1$ for all $j \in J$ and submodularity, this implies $f(D_2)=f(D-D_1)>2$.
    Now, choose the partition $D_i$ with the smaller $y$-value, and call it $C$.
    As a subset of $D$, clearly $C$ is independent in all matroids, and $y(C) < \frac{1}{2}$.

    Consider reallocating all items from $C$ to player $q$ and discarding all original items in $A_q$.
    We want to argue that this modified allocation has higher potential.
    The gain for player $q$ is $\overline{f}(C) - \overline{f}(A_q) \geq 2 - 1 = 1$.
    The loss for all other players is
    \begin{align*}
        \sum_{p \in P-\{q\}} \Delta_{\overline{f}}(-C \mid A_p) &\leq \sum_{p \in P-\{q\}} \sum_{j \in A_p} \Delta_{\overline{f}}(-j \mid A_p)\\
        &=\sum_{p \in P-\{q\}}\sum_{j \in A_p} y_j \cdot T_{\overline{f}}(p) \\
        &\leq y(C) \cdot 2\\
        &< 1 
    \end{align*}
since we can bound $T_{\overline{f}}(p)$ as in the proof of \cref{th:3}.
The total potential change is negative. This contradicts our choice of $A$.
Hence, the configuration constraints are all satisfied. The value constraint is satisfied since 
$$\sum_{j \in J} y_j = \sum_{j \notin A} y_j +\sum_{j \in A_q} y_j + \sum_{p \in P-\{q\}}\sum_{j \in A_p} y_j = \sum_{p \in P-\{q\}} \sum_{j \in A_p} y_j \leq m-1<m.$$
\end{proof}
Notice that we could replace matroids by any set system closed under subsets, but then of course we might not be able to solve the configuration LP efficiently.
\subsection*{Conclusion}
We presented an improved approximation algorithm for submodular max-min allocation under identical valuations, and showed some constant upper bounds on the integrality gap of the configuration LP under identical valuations.

However, several natural questions remain open. First, the proposed $0.4$-approximation is still far from the best-known complexity-theoretical hardness bound of $1-\frac{1}{e}\approx 0.63$. 
Another promising direction is to further explore the configuration LP. 
It would be valuable to obtain constructive, polynomial-time rounding algorithms that achieve the proven integrality gaps.
Furthermore, it would be interesting to see by how much we can relax the assumption of identical valuations. A natural first step in this direction would be to determine the integrality gap of the configuration LP in the case of submodular ``related machines'', where each player can be satisfied with a different value. 
\bibliography{bib}

\newpage

\appendix
\renewcommand{\thesection}{Appendix~\Alph{section}}
\renewcommand{\thelemma}{\Alph{section}.\arabic{lemma}}
\renewcommand{\theobservation}{\Alph{section}.\arabic{observation}}
\label[appsec]{appendix_definition}
\section{Basic Facts about Submodular Functions}\label{app:basic}
Let $f:2^J \rightarrow \RR$ be submodular and monotone.
\begin{lemma}\label{lem:subm-add}
    Consider a new item $z \notin J$.
    For any $t \in \RR$, define the function $f_t:2^{J\cup \{z\}} \rightarrow \RR$ by $f_t(S \cup \{z\})=f(S)+t$ and $f_t(S)=f(S)$ for all $S \not\ni z$.
    Then, $f_t$ is submodular and monotone.
\end{lemma}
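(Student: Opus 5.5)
The plan is to verify monotonicity and submodularity of $f_t$ directly from the definitions, splitting into cases by whether the new item $z$ lies in the sets involved. We need $t \geq 0$ for monotonicity: adding $z$ to $S$ changes the value by exactly $t$. This is how the lemma is used in \cref{lem:no-edge}, where $t$ is a maximum of a value and a marginal contribution of a monotone function, hence nonnegative. So I will read the statement as $t \ge 0$, or observe that submodularity holds for every $t$ while monotonicity needs $t\geq 0$.

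\textbf{Monotonicity.} Take $S\subseteq T\subseteq J\cup\{z\}$ and write $S'=S-\{z\}$, $T'=T-\{z\}$, so that $S'\subseteq T'$. There are three cases.
\begin{itemize}
\item If $z\notin T$, then $f_t(S)=f(S')\le f(T')=f_t(T)$ by monotonicity of $f$.
\item If $z\in S$, then both values carry the additive $t$, and again $f(S')+t\le f(T')+t$.
\item If $z\in T\setminus S$, then $f_t(S)=f(S')\le f(T')\le f(T')+t=f_t(T)$, using $t\ge 0$.
\end{itemize}

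\textbf{Submodularity.} I will use the characterization of submodularity via diminishing marginal contributions, $\Delta_{f_t}(j\mid S)\ge \Delta_{f_t}(j\mid T)$ for $S\subseteq T$ and $j\notin T$. (The lattice inequality for arbitrary $S,T$ is equivalent to this; one could also check it directly.)
\begin{itemize}
\item For $j=z$: $\Delta_{f_t}(z\mid S)=f(S')+t-f(S')=t$ for every $S\not\ni z$. This marginal is constant, so the inequality holds with equality.
\item For $j\in J$: $\Delta_{f_t}(j\mid S)=\Delta_f(j\mid S')$ whether or not $z\in S$, because the additive $t$ appears in both terms and cancels. Hence $\Delta_{f_t}(j\mid S)=\Delta_f(j\mid S')\ge\Delta_f(j\mid T')=\Delta_{f_t}(j\mid T)$ by submodularity of $f$, since $S'\subseteq T'$.
\end{itemize}
Finally, $f_t(\emptyset)=f(\emptyset)=0$.

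If the lattice form $f_t(S\cup T)+f_t(S\cap T)\le f_t(S)+f_t(T)$ is preferred directly, the same case split works. Count the occurrences of $t$ on each side according to whether $z$ lies in both sets, in exactly one, or in neither. The count is the same on both sides: $2$, $1$ or $0$ respectively. The inequality then reduces to submodularity of $f$ on $S',T'$.

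There is no real obstacle here. The only point needing care is the sign of $t$ in the monotonicity case $z\in T\setminus S$, which is exactly where $t\ge 0$ enters.
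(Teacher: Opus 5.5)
Your proof is correct, and its lattice-form variant is exactly the paper's argument: the paper writes $f_t(S)+f_t(T)$ as $f(S)+f(T)$ plus $(\II_{z\in S}+\II_{z\in T})\cdot t$, uses that this coefficient equals $\II_{z\in S\cup T}+\II_{z\in S\cap T}$, and applies submodularity of $f$; your $S',T'$ notation makes explicit the removal of $z$ that the paper leaves implicit, and your marginal-returns version is just the local form of the same cancellation. You are also right that monotonicity requires $t\ge 0$ (for $t<0$, $f_t(\{z\})=t<0=f_t(\emptyset)$), which the paper's ``Monotonicity is clear'' silently assumes; this is harmless because $t\ge 0$ in the application in \cref{lem:no-edge}.
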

\begin{proof}
    Monotonicity is clear. For submodularity,
    \begin{align*}
        f_t(S)+f_t(T) &= f(S)+f(T)+(\II_{z \in S} + \II_{z \in T}) \cdot t \\
        &\geq f(S \cup T)+f(S \cap T)+(\II_{z \in S} + \II_{z \in T}) \cdot t\\
        &= f(S \cup T)+f(S \cap T)+(\II_{z \in S \cup T} + \II_{z \in S \cap T}) \cdot t\\
        &=f_t(S \cup T)+f_t(S \cap T)
    \end{align*}
\end{proof}
The following statements are folklore. 
\begin{observation}\label{obs:restr}
    The restriction of $f$ to any subset $J' \subseteq J$ is submodular and monotone.
\end{observation}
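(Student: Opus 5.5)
The plan is to check the two required properties directly from the definitions. Let $g$ denote the restriction of $f$ to $2^{J'}$, i.e. $g(S)=f(S)$ for all $S \subseteq J'$.

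\textbf{Submodularity.} Take arbitrary $S,T \subseteq J'$. Then $S \cup T$ and $S \cap T$ are again subsets of $J'$, so $g$ is defined on both, and $g(S\cup T)=f(S\cup T)$, $g(S\cap T)=f(S\cap T)$. Applying the defining inequality of $f$ (from \cref{sec:preliminaries}) to $S,T \subseteq J$ gives
$$g(S\cup T)=f(S\cup T)\leq f(S)+f(T)-f(S\cap T)=g(S)+g(T)-g(S\cap T).$$

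\textbf{Monotonicity.} For $S \subseteq T \subseteq J'$ we also have $S\subseteq T\subseteq J$, hence $g(S)=f(S)\leq f(T)=g(T)$.

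\textbf{Normalization.} Finally, $g(\emptyset)=f(\emptyset)=0$, so $g$ satisfies the standing normalization as well.

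There is no real obstacle. The only point requiring care is that the power set $2^{J'}$ is closed under union and intersection, so every quantity appearing in the submodularity inequality lies in the domain of the restriction. Once this is noted, both properties are inherited verbatim from $f$.
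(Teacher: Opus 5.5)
Your proof is correct: it verifies submodularity, monotonicity and $f(\emptyset)=0$ directly from the definitions, and it correctly notes that $2^{J'}$ is closed under union and intersection. The paper gives no proof at all and simply calls this observation folklore, so your direct check is exactly the intended argument.
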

\begin{observation}\label{obs:telescopic}
    Let $S=\{j_1,...,j_k\}$.
    Then, $\Delta_f(S \mid T)=\sum_{i=1}^k \Delta_f(j_i \mid T \cup \{j_1,...,j_{i-1}\})$.
\end{observation}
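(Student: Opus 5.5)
The plan is a direct telescoping argument. It uses only the definitions of marginal contribution from \cref{sec:preliminaries}; neither submodularity nor monotonicity is needed. First I will introduce the chain of nested sets
$$T_0 := T, \qquad T_i := T \cup \{j_1,\dots,j_i\} \quad (i=1,\dots,k),$$
so that $T_k = T \cup S$ and $T_{i} = T_{i-1} \cup \{j_i\}$ for every $i \in [k]$.

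Next I will rewrite each summand using the definition $\Delta_f(j \mid X) = f(X \cup \{j\}) - f(X)$:
$$\Delta_f(j_i \mid T \cup \{j_1,\dots,j_{i-1}\}) = f(T_{i-1}\cup\{j_i\}) - f(T_{i-1}) = f(T_i) - f(T_{i-1}).$$
This identity also holds when $j_i$ already lies in $T_{i-1}$, for instance when $j_i \in T$. In that case $T_i = T_{i-1}$, and both sides are $0$.

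Summing over $i=1,\dots,k$, the right-hand sides telescope to $f(T_k) - f(T_0) = f(T \cup S) - f(T)$. By the extension of the marginal-contribution notation to sets, this is exactly $\Delta_f(S \mid T)$.

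There is no real obstacle. The only points needing care are the degenerate cases. If $k=0$, both sides are $0$. If $S$ and $T$ overlap, the zero terms are handled by the remark above. The fixed enumeration $j_1,\dots,j_k$ of $S$ is arbitrary, and the identity holds for every ordering.
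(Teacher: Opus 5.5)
Your telescoping argument over the chain $T_0 \subseteq T_1 \subseteq \dots \subseteq T_k$ is correct and complete. It is the standard argument the paper relies on: the paper lists this observation as folklore and gives no proof, and, as you note, neither submodularity nor monotonicity is used.
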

A direct corollary by using submodularity is:
\begin{observation}\label{obs:telescopic-submod}
    $\Delta_f(S \mid T) \leq \sum_{j \in S} \Delta_f(j \mid T)$.
\end{observation}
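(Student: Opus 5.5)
The plan is to derive the inequality term by term from \cref{obs:telescopic}, comparing each marginal contribution against the one with respect to the smaller set $T$ alone.

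Fix an arbitrary enumeration $S=\{j_1,\dots,j_k\}$ of the elements of $S$. By \cref{obs:telescopic},
\begin{align*}
    \Delta_f(S \mid T) = \sum_{i=1}^k \Delta_f\bigl(j_i \mid T \cup \{j_1,\dots,j_{i-1}\}\bigr).
\end{align*}
For each $i$ we have $T \subseteq T \cup \{j_1,\dots,j_{i-1}\}$. The diminishing-returns characterization of submodularity from \cref{sec:preliminaries} (for $S'\subseteq T'$ and any $j$, $\Delta_f(j\mid S')\geq \Delta_f(j\mid T')$) then gives
\begin{align*}
    \Delta_f\bigl(j_i \mid T \cup \{j_1,\dots,j_{i-1}\}\bigr) \leq \Delta_f(j_i \mid T).
\end{align*}
Summing this over $i=1,\dots,k$ turns the right-hand side into exactly $\sum_{j \in S}\Delta_f(j \mid T)$, which is the claim.

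The only point needing care is the degenerate case where $j_i$ already lies in $T$ or in $\{j_1,\dots,j_{i-1}\}$. If $j_i\in T$, the corresponding term is $0$ on both sides. Repetitions within $S$ cannot occur because $S$ is a set. So the diminishing-returns inequality, which holds for all $j\in J$, applies uniformly and no case split is needed.

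I expect no real obstacle here. The statement is a direct corollary of \cref{obs:telescopic} together with the equivalence between submodularity and diminishing marginal returns. That equivalence is quoted in the preliminaries, so I would simply cite it rather than re-derive it from the lattice inequality $f(S\cup T)\le f(S)+f(T)-f(S\cap T)$.
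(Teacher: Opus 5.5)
Your proof is correct and takes the paper's approach: the paper presents this observation as a direct corollary of \cref{obs:telescopic} plus submodularity. That is exactly your term-by-term application of diminishing returns to the telescoping sum.
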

\begin{observation}\label{obs:marg}
    $f(S \cup T)=f(S) + \Delta_f(T \mid S)$.
\end{observation}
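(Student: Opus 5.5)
The identity in \cref{obs:marg} is essentially a restatement of the definition of marginal contribution for sets given in \cref{sec:preliminaries}, so I would prove it by unfolding that definition. No submodularity or monotonicity is needed.

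Recall that for sets we defined $\Delta_f(S \mid T) = f(S \cup T) - f(T)$. Instantiating this with the roles of the two sets swapped gives
$$\Delta_f(T \mid S) = f(T \cup S) - f(S).$$
Since set union is commutative, $T \cup S = S \cup T$, and hence $\Delta_f(T \mid S) = f(S \cup T) - f(S)$. Adding $f(S)$ to both sides yields $f(S \cup T) = f(S) + \Delta_f(T \mid S)$, which is the claim.

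The only point requiring care is the argument order in the definition: the set being added comes first and the conditioning set comes second. Matching the statement means taking $T$ as the added set and $S$ as the base, and then using commutativity of $\cup$. There is no genuine obstacle here.

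For completeness, one can cross-check the result against \cref{obs:telescopic}. Writing $T - S = \{j_1, \dots, j_k\}$, the telescoping sum of single-item marginals over $T - S$ collapses to $f(S \cup T) - f(S)$. This agrees with the definition, since elements of $T \cap S$ contribute zero marginal gain.
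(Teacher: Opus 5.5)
Your proof is correct and matches the paper, which lists this identity as folklore without proof because it is just the definition $\Delta_f(T \mid S) = f(T \cup S) - f(S)$ rearranged, exactly as you unfold it. The telescoping cross-check via \cref{obs:telescopic} is not needed.
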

\begin{observation}\label{obs:zero-contribution}
    For $S \subseteq T$, $\Delta_f(S \mid T) = 0$.
\end{observation}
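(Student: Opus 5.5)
The plan is to unfold the definition of the marginal contribution of a set and use only the set-theoretic identity $S \cup T = T$. No submodularity or monotonicity is needed, so none of the earlier structural results have to be invoked.

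First, recall from \cref{sec:preliminaries} that for sets $S,T \subseteq J$ we defined $\Delta_f(S \mid T) = f(S \cup T) - f(T)$. Second, since $S \subseteq T$, every element of $S$ already lies in $T$, so $S \cup T = T$. Substituting this into the definition gives $\Delta_f(S \mid T) = f(T) - f(T) = 0$.

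For consistency with \cref{obs:telescopic}, I would also check the statement against the telescoping decomposition. Writing $S=\{j_1,\dots,j_k\}$, each term $\Delta_f(j_i \mid T \cup \{j_1,\dots,j_{i-1}\})$ equals $f(T)-f(T)=0$, because $j_i \in T$ and $T \cup \{j_1,\dots,j_{i-1}\} = T$. This is redundant but harmless.

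There is no real obstacle here. The only point to be careful about is that the definition of $\Delta_f$ on sets is applied verbatim, so that no hidden disjointness assumption between $S$ and $T$ is required.
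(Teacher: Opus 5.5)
Your proof is correct: unfolding $\Delta_f(S \mid T) = f(S \cup T) - f(T)$ and using $S \cup T = T$ is exactly the one-line argument behind this observation, which the paper lists as folklore without giving a proof. Your telescoping cross-check is redundant but also correct.
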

\section{Truncated Max-Sum Greedy} \label{app:A-tight}
In the following, we describe an instance where $\OPT=5$ but the greedy algorithm achieves only $2+\delta$ for a small $\delta$, independently of the choice of $\alpha$.
We first show that it is sufficient to find a certain bad example in an instance with an additive valuation function.
\begin{lemma}\label{lem:additive-app}
    Assume there is an instance $\mathcal{I}_{\sf{additive}}$ with an additive valuation function and a $\delta \geq 0$ with the following properties:
    \begin{itemize}
        \item There exists a partial allocation $A^*$ such that some item $j$ with $f(j)>0$ is \emph{not} allocated in $A^*$ and $f_{\sf min}(A^*) \geq 3-\delta$
        \item If $\alpha \cdot \OPT> 2 + \delta$, greedy can produce an allocation $A$ where all items are allocated.
    \end{itemize}
    Then, there exists an instance $\mathcal{I}_{\sf{submodular}}$ with a submodular valuation function where the optimal solution achieves at least $5$ and independently of the choice of $\alpha$, greedy can produce an allocation $A$ with $f_{\sf min}(A) \leq \frac{2+\delta}{5}$.
\end{lemma}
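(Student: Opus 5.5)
The plan is to build $\mathcal{I}_{\sf submodular}$ by gluing two gadgets together, one for each regime of $\alpha$. In every case the valuation will be a sum of a truncated additive part and a coverage-type part, so that submodularity and monotonicity follow from \cref{lem:subm-add}, \cref{obs:restr} and the fact that nonnegative sums and truncations $\min(c,\cdot)$ of monotone submodular functions are again monotone submodular. Throughout I normalize so that the optimum is $5$. I read the target bound as $f_{\sf min}(A)\le 2+\delta$ on this scale, i.e. a $\frac{2+\delta}{5}$ fraction of $\OPT$.

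\textbf{Regime $\alpha\cdot\OPT > 2+\delta$.} Give every player a private \emph{base item} $b_p$ of value $2$, realized as an additive offset as in \cref{lem:subm-add}. Then put a scaled copy of $\mathcal{I}_{\sf additive}$ on top. Because the base items are the $m$ largest items, the first step of \cref{alg:greedy-maxsum} hands one base item to each player. After that step all marginal contributions of the additive items are exactly their additive weights.

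In the optimum, each player keeps its base item and receives its bundle from $A^*$, which gives $2+(3-\delta)=5-\delta$. The item $j$ that $A^*$ leaves unallocated is what closes the remaining gap of $\delta$. I plan to split $j$ into $m$ tiny pieces, or to take $m$ disjoint copies of $\mathcal{I}_{\sf additive}$ and rescale by a factor slightly above one, so that every player can reach at least $5$. The rescaling has to be slight enough that the greedy run is preserved.

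For the greedy run, the second hypothesis of \cref{lem:additive-app} says that when the effective threshold exceeds $2+\delta$, greedy allocates every additive item. Since all additive items are then exhausted, the min player ends with value at most $2+\delta$. Here I need to check that the offset of $2$ shifts the threshold consistently, i.e. that the threshold seen by the additive part is $\alpha\cdot\OPT-2$. If it is not, I would instead place the additive weights inside $\min(5,\cdot)$ and argue directly.

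\textbf{Regime $\alpha\cdot\OPT\le 2+\delta$.} Here greedy stops each player as soon as the player crosses the threshold. So I add a second, disjoint block of players and items consisting of many tiny items, for instance a coverage function over many small elements. This block has optimum $5$, yet greedy stops every player at at most $\alpha\cdot\OPT+\varepsilon\le 2+\delta+\varepsilon$. The two blocks are combined by summing the valuations over disjoint ground sets, and item values are arranged so that greedy's interleaving across blocks cannot help any player.

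The main obstacle is making a single instance work simultaneously for every $\alpha$ while keeping the optimum exactly $5$. Concretely, I need to ensure that the unallocated item $j$ (together with the scaling) lifts the optimum from $5-\delta$ to $5$ without changing the greedy choices, and that the base-item offset really turns ``greedy allocates all items at threshold $>2+\delta$'' into ``some player ends at $\le 2+\delta$'' in the submodular instance. My first attempt is the replication-plus-rescaling trick. If tie-breaking then becomes a problem, I would perturb the weights infinitesimally so that greedy's choices are forced.
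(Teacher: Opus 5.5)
\textbf{There is a genuine gap.} It sits in your regime $\alpha\cdot\OPT>2+\delta$, at exactly the step you flag as the ``main obstacle''.

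The inference ``all additive items are exhausted, hence the min player ends with value at most $2+\delta$'' does not follow. Exhausting the items says nothing about any individual bundle. In your instance every player holds a base item worth $2$ plus whatever additive items it receives. You would therefore need some player to collect additive value at most $\delta$, and nothing in the hypotheses gives you that.

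Moreover, the hypothesis cannot even be invoked. It describes a run of greedy on $\mathcal{I}_{\sf additive}$ whose first step hands each player one of the $m$ largest \emph{additive} items, with a threshold on additive value exceeding $2+\delta$. In your instance, the first step hands out base items instead, and the additive part only faces the effective threshold $\alpha\cdot 5-2$, which can be barely above $\delta$. So the hypothesized run is not reproduced.

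An additive top-up is also the wrong kind of gadget. Value that every player can collect in the optimum is, under additivity, equally collectible by the player you want to keep low; your construction even hands it to every player in the first step.

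Likewise, the unallocated item $j$ is not a tool for closing a gap of $\delta$. Splitting it into $m$ pieces gives each player only $f(j)/m$. Rescaling so that $2+c(3-\delta)\ge 5$ needs $c\ge\frac{3}{3-\delta}$, which is not ``slight'' and changes greedy's interaction with the threshold.

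\textbf{The missing idea is to add one new player $p^*$} that does not exist in $\mathcal{I}_{\sf additive}$.
\begin{itemize}
\item Greedy can reproduce the hypothesized run on the original players, in parallel in every copy. It therefore allocates all additive items to them, and $p^*$ receives none.
\item The item $j$ left out by $A^*$ is what satisfies $p^*$ in the optimum. Replicate the instance $\lceil 5/f(j)\rceil$ times and give $p^*$ all copies of $j$.
\item The original players are lifted from $3-\delta$ to $5$ by a \emph{coverage} gadget of total value $2+\delta$. It has $N=(2+\delta)/s'$ elements of weight $s'\le\min_{j'}f(j')$, an item $z^p_i$ covering element $i$ for each original player $p$, and an item $z^*$ covering all elements.
\item In the optimum each $p$ takes its own $z^p_1,\dots,z^p_N$ and gains $2+\delta$. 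No single bundle, in particular not $p^*$'s greedy bundle, can get more than $2+\delta$ from the whole gadget.
\item Since $s'\le s$, greedy may finish the additive items first.
\end{itemize}
This private-in-the-optimum but globally capped value is where submodularity is genuinely used, and your proposal has no substitute for it.

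(For $\alpha\cdot\OPT\le 2+\delta$, the paper simply uses a separate trivial instance with items of value exactly $2+\delta$. Your tiny-item block would only give $2+\delta+\varepsilon$.)
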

\begin{proof}
    To create $\mathcal{I}_{\sf{submodular}}$, start by copying $\mathcal{I}_{\sf{additive}}$ $\lceil \frac{5}{f(j)} \rceil$ many times (i.e. create new items and new players, while maintaining the additive weights of all items). Let $P'$ be the players and $J'$ be the items of the resulting instance.
    A valid run of greedy simulates each step of the run on $\mathcal{I}_{\sf{additive}}$ on every copy in ``parallel''.
    Therefore, the resulting allocation $A'$ still allocates all items.
    Clearly, there still exists an allocation $A'^*$ with $f_{\sf min}(A'^*) \geq 5$.

    Let $s=\min_{j' \in J'} f(j')$ (we can assume $s>0$, otherwise we could have discarded the item $j'$ with $f(j') = 0$), and let $0 < s' \leq s$ divide $2+\delta$.
    Create a new item $z^*$, and for $p \in P'$ and $i \in [N:=\frac{2+\delta}{s'}]$, create an item $z^p_i$.
    
    Let us define a new valuation function $f'$ on all items. 
    Set $\Delta_{f'}(z^p_i \mid S) = s'$ if $z^{p'}_i,z^* \notin S$ for all $p'$, and set $\Delta_{f'}(z^p_i \mid S) = 0$ otherwise.
    Set $\Delta_{f'}(z^* \mid S) = 2 + \delta - |\{j \mid \exists p: z^p_i \in S\}|$.
    For all $j' \in J'$, set $\Delta_{f'}(j' \mid S) = f(j')$ for all sets $S$.
    It is straightforward to check that $f'$ is a coverage function and thus submodular.

    Our instance $\mathcal{I}_{\sf{submodular}}$ consists of $J'$ together with $z^*$ and all $z^p_i$, valuation function $f'$ and players $P' \cup \{p^*\}$ for a single new player $p^*$.
    First, let us show that there exists an allocation $B^{*}$ with $f_{\sf min}(B^{*}) \geq 5$. For each player $p \in P'$, set $B^{*}_p = A'^*_p \cup \{z^p_j \mid j \in [N]\}$. Then, $f(B^{*}_p) = f(A'^*_p) + N \cdot s' \geq 3-\delta + 2+\delta = 5$.
    To the special player $p^*$, we give all copies of item $j$. Since there are $\lceil \frac{5}{f(j)} \rceil$ many copies, we get $f(B^{*}_p) \geq 5$.

    Now, let us argue that greedy can produce an allocation $B$ with $f_{\sf min}(B) \leq 2 + \delta$.
    We can assume $\alpha \cdot \OPT \geq 2+\delta+\varepsilon$ for some $\varepsilon >0$, since otherwise it is easy to construct an instance where greedy achieves at most $2+\delta$ (e.g., many items of additive value $2+\delta$).
    First, notice that greedy can start by allocating all additive items as in $A'$, since the $z^p_i$-items have smaller marginal contributions.
    At this point, $p^*$ has not received any item. Since for the remaining items $f(z^* \cup \bigcup_{p,j} z^p_i) = 2 + \delta$, in the final allocation $B$, player $p^*$ can achieve at most value $2 + \delta$. This finishes the proof.

\end{proof}
Now we are left with exhibiting an instance that satisfies the conditions of the lemma for a $\delta$ as small as possible.
Fix some $N \in \NN^+$ and let $s_1,\dots,s_N$ be the first $N$ members of the Sylvester sequence $s_n = 1 + \prod_{i = 1}^{n-1} s_i$. Let $m=2\cdot (s_N-1)$ be the number of players and $J$ contain the following items:
\begin{itemize}
    \item $m/2$ items of size $2+\delta$
    \item $m$ items of size $\frac{3-\delta}{2}$
    \item $m/2$ items of size $\frac{1+3\delta}{2(s_i-1)}$ for each $i \in [N-1]$
    \item $m/2+1$ items of size $\frac{1+3\delta}{2(s_N-1)}$. 
\end{itemize}
The following is a valid greedy allocation for threshold value $\alpha \cdot \OPT > 2+\delta$.
\begin{itemize}
    \item $m/2$ players receive $2+\delta$ and $\frac{3-\delta}{2}$
    \item For each $i \in [N-1]$, $\frac{m}{2s_i}$ players receive $\frac{3-\delta}{2}$ and $s_i$ items of size $\frac{1+3\delta}{2(s_i-1)}$ 
    \item one player receives $\frac{3-\delta}{2}$ and $s_N$ items of size $\frac{1+3\delta}{2(s_N-1)}$.
\end{itemize}
The number of items of size $\frac{3-\delta}{2}$ that are used is
$$\frac{m}{2} + \Bigl(\sum_{i = 1}^{N-1}  \frac{m}{2s_i}\Bigr) + 1 = \frac{m}{2} + \frac{m}{2} \cdot \Bigl(\sum_{i = 1}^{N-1} \frac{1}{s_i}\Bigr) + 1= \frac{m}{2} + \frac{m}{2} \cdot \Bigl(1-\frac{1}{s_N-1}\Bigr) +1= m$$
where the equality $\sum_{i = 1}^{N-1} \frac{1}{s_i} = 1-\frac{1}{s_N-1}$ is known and easy to show.
One can check that each player gets one of the $m$ largest items, that all items got allocated and that if we remove one item from a player, the sum of the remaining item sizes is at most $2+\delta < \alpha \cdot \OPT$.
The following is another solution:
\begin{itemize}
    \item $m/2$ players receive $2$ items of size $\frac{3-\delta}{2}$
    \item $m/2$ players receive $2+\delta$ and one item of size $\frac{1+3\delta}{2(s_i-1)}$ for each $i \in [N]$.
\end{itemize}
Notice that one item of size $\frac{1+3\delta}{2(s_N-1)}$ is not allocated. The first half of the players achieves value $3-\delta$.
For the second half to also achieve the same value, we need
$$2+\delta + \frac{1+3\delta}{2}\cdot S = 3-\delta$$
where $S = \sum_{i=1}^N \frac{1}{s_i -1}$. Therefore we can apply \cref{lem:additive-app} with
$$\delta = \frac{2-S}{4+3S}$$
When $N$ goes to infinity, $S \approx 1.691$. Thus the ratio between the greedy solution and the optimal solution is at least 
$$\frac{2+\delta}{5} = \frac{2+S}{4+3S} \approx 0.4068.$$
We believe that this is the worst possible ratio achievable by the greedy algorithm.
\section{Lower bound on the integrality gap}\label{app:integrality-gap}
\propLowerBoundSubmodularIntegralityGap*
\begin{proof}
Consider items $j_1,j_2,j_3,k_1,k_2,k_3$ and a submodular function defined as $f(j_a)=f(k_a)=2$ for all $a$, $f(\{j_a,j_b\})=f(\{k_a,k_b\})=4$ for all $a \neq b$, $f(\{j_a,k_b\})=3$ and $f(S)=4$ for all $S$ such that $|S| \geq 3$.
Our instance has $3$ players.
For all $j \in J$, we have $\Delta_f(j \mid S)=2$ when $|S|=0$, $\Delta_f(j \mid S) \in [1,2]$ when $|S|=1$, $\Delta_f(j \mid S) \in [0,1]$ when $|S|=2$ and $\Delta_f(j \mid S)=0$ when $|S|>2$, which implies submodularity of $f$ (monotonicity is clear).

The configuration LP is feasible for this instance with threshold value $4$: We set $x_{p,\{j_a,j_b\}}=x_{p,\{k_a,k_b\}}=1/6$ for all $p$ and $a \neq b$.
Each player gets one unit of configuration (because there are six configurations in the support) and each item appears only in two configurations of the support, and thus in at most one unit of configuration.

Now, assume that there is an integral solution of value at least $4$. 
Since there are six items and three players, and a singleton item only gets value $2$, each player must be allocated exactly $2$ items.
But in any allocation of configurations of two items to three players, one player must obtain items of the form $\{j_a,k_b\}$, which only gives value $3$.
\end{proof}
\end{document}